\nonstopmode
\documentclass[a4paper,twoside]{article}     
\usepackage{amsmath,amsthm,amssymb,amsfonts}
\usepackage[numbers]{natbib}

\usepackage{graphicx}               
\usepackage{color}                  
\usepackage[latin1]{inputenc}
\usepackage[T1]{fontenc}  
\usepackage{array}

\newcommand*\samethanks[1][\value{footnote}]{\footnotemark[#1]}
\newcommand{\beq} {\begin{eqnarray*}}
\newcommand{\eeq} {\end{eqnarray*}}
\newcommand{\trm} {\textrm}
\newcommand{\tbf} {\textbf}

\def \R{\mathbb{R}}

\def \E{\mathbb{E}}
\def \P{\mathbb{P}}
\def \Var{\hbox{{\rm Var}}}

\def \Cov{\hbox{{\rm Cov}}}
\def \myS{S_{\mathrm{Cl}}}

\newcolumntype{M}[1]{>{\raggedright}m{#1}}\usepackage{array}
\newcommand{\1}{{1\hspace{-0.2ex}\rule{0.12ex}{1.61ex}\hspace{0.5ex}}}

\def\alert#1{{\color{red} #1}}

%\pagestyle{myheadings}         

% définitions des marges
\oddsidemargin 0cm
\evensidemargin 0cm
\textwidth 15.5cm
\topmargin -1cm
\textheight 24cm
\parskip 1mm
\setlength{\parindent}{0pt}

% mes definitions
\newtheorem{theorem}{Theorem}[section]
\newtheorem{theo}{Theorem}

\newtheorem{cor}[theorem]{Corollary}
\newtheorem{lemma}[theorem]{Lemma}
\newtheorem{rem}[theorem]{Remark}

%\newtheorem{proof}{Proof}

% pour les références
\RequirePackage[colorlinks=true,citecolor=blue,linkcolor=blue]{hyperref}
\usepackage{hyperref}

\title{Statistical inference for Sobol pick freeze Monte Carlo method}
\author{F. Gamboa\thanks{Institut de math\'ematiques de Toulouse, Universit\'e Toulouse 3},  A. Janon\thanks{Laboratoire de Sciences Actuarielle et Financière, ISFA, Universit\'e Claude Bernard Lyon 1}, T. Klein\samethanks[1], A. Lagnoux\samethanks[1], C. Prieur\thanks{Laboratoire Jean Kuntzmann, MOISE/INRIA, Université Joseph Fourier, Grenoble}}

\begin{document}

\begin{titlepage}
\maketitle
\begin{abstract}
Many mathematical models involve input parameters, which are not precisely known. Global sensitivity analysis aims to identify the parameters whose uncertainty has the largest impact on the variability of a quantity of interest (output of the model). One of the statistical tools used to quantify the influence of each input variable on the output is the Sobol sensitivity index. We consider the statistical estimation of this index from a finite sample of model outputs. We study asymptotic and non-asymptotic properties of two estimators of Sobol indices. These properties are applied to significance tests and estimation by confidence intervals.
\end{abstract}

\tableofcontents
\end{titlepage}

\section{Introduction}
Many mathematical models encountered in applied sciences involve a large number of poorly-known parameters as inputs. It is important for the practitioner to assess the impact of this uncertainty on the model output. An aspect of this assessment is sensitivity analysis, which aims to identify the most sensitive parameters, that is, parameters having the largest influence on the output. In global stochastic sensitivity analysis (see for example \cite{saltelli-sensitivity} and references therein) the input variables are assumed to be  independent random variables.
Their probability distributions  account for the practitioner's belief about the input uncertainty. This turns the model output into a random variable, whose total variance can be split down into different partial variances (this is the so-called Hoeffding decomposition, see \cite{van2000asymptotic}). Each of these partial variances measures the uncertainty on the output induced by each input variable uncertainty. By considering the ratio of each partial variance to the total variance, we obtain a measure of importance for each input variable that is called the \emph{Sobol index} or \emph{sensitivity index} of the variable \cite{sobol1993}; the most sensitive parameters can then be identified and ranked as the parameters with the largest Sobol indices. 

Once the Sobol indices have been defined, the question of their effective computation or estimation remains open. In practice, one has to estimate (in a statistical sense) those indices using a finite sample (of size typically in the order of hundreds of thousands) of evaluations of model outputs \cite{helton2006survey}. Indeed, many Monte Carlo or quasi Monte Carlo approaches have been developed by the experimental sciences and engineering communities. This includes  
the Sobol pick-freeze (SPF) scheme  (see \cite{sobol1993,sobol2001global}). In SPF a Sobol index is viewed as the regression coefficient between the output of the model and its pick-freezed replication. This replication is obtained by holding the value of the variable of interest (frozen variable) and by sampling the other variables (picked variables). The sampled replications are then combined to produce an estimator of the Sobol index. In this paper we study very deeply this Monte Carlo method in the general framework where one or more variables can be frozen. This allows to define sensitivity indices with respect to a general random input living in a probability space (groups of variables, random vectors, random processes...). 

In \cite{jaal}, the authors have studied the asymptotic behavior of two pick-freeze estimators of a single Sobol index. The results in this paper can be continued in two directions. The first direction is motivated by the fact that 
in general, so as to rank input variables according to their importance, the pratictioners jointly estimate the collection of all the first-order as well as the total Sobol indices. As these different estimators are dependent, the asymptotic marginal distributions are not fully informative, and one has to characterize the joint law of the estimators. This joint law allows, for example, to perform significance tests and comparisons between different indices, so as to rigorously rank the input variables, taking into account indices estimation errors. The second direction is motivated by the fact that asymptotic distributions are unattainable in practice, hence, non-asymptotic tools (such as concentration inequalities, and Berry-Esseen-like theorems) about the distribution of the Sobol indices estimators should be investigated. Such results will allow conservative certification for the index estimates. 

This paper is organized as follows: in Section \ref{s:pfz}, we review the Sobol pick-freeze method and give the estimators that are studied in the paper. In Section \ref{s:joint}, we prove a central limit theorem which gives the joint asymptotic distribution of any closed Sobol index \cite{saltelli-sensitivity}, which in particular can be used to explicit the asymptotic distribution of all first-order and total index estimators. We then apply this central limit theorem to significance and comparison tests on Sobol indices. Sections \ref{s:concentre} and \ref{s:berry} are dedicated to non-asymptotic studies of the distribution of a single Sobol index estimator. These two sections, respectively, give concentration inequalities and Berry-Esseen bounds. All our theoretical results are numerically illustrated on model examples.

\section{Sobol pick freeze Monte Carlo method}
\label{s:pfz}
\subsection{Black box model and Sobol indices}
In the whole paper, we consider a non necessarily linear regression model connecting an output $Y \in \R$ to independent  random input vectors $X_1, \ldots X_p$ with for $i=1, \ldots p$, $X_i$ belongs to some probability space $\mathcal{X}_i$. We denote
\begin{equation}\label{def:lien_boite_noire}
Y=f(X):=f(X_1, \ldots , X_p)
\end{equation}
where $f$ is a deterministic real valued measurable function defined on $\mathcal{X}=\mathcal{X}_1 \times \ldots  \mathcal{X}_p$. We assume that $Y$ is square integrable and non deterministic ($\Var Y \neq 0$).

Let ${\bf u}:=(u_1,\ldots,u_k)$ be $k$ subsets of $I_p:=\{1,\ldots,p\}$. The vector of closed Sobol indices  (see \cite{saltelli-sensitivity}) is then 
$$
\myS^{{\bf u}}:=\left(\frac{\Var (\mathbb{E}(Y|X_i, i \in u_1))}{\Var(Y)}, \ldots,\frac{\Var (\mathbb{E}(Y|X_i, i \in u_k))}{\Var(Y)}\right).
$$

As pointed out and discussed in the Introduction, Sobol indices are useful quantities widely used in engineering  and applied sciences in the context of prioritisation of influent input variables of a complicated computer simulation code (see for example \cite{saltelli-sensitivity}, \cite{rocquigny2008uncertainty}) and our paper gives a rigourous statistical analysis of these quantities.
Notice that considering the whole vector $\myS^{{\bf u}}$ allows estimation of asymptotic confidence regions and tests for joint significance (see Section \ref{s:joint}).

\subsection{Monte Carlo estimation of $S$: Sobol pick freeze method}

For  $X$ and for any subset $v$ of $I_p$ we define  $X^v$ by the vector such that $X^v_i=X_i$ if $i\in v$ and $X^v_i=X'_i$ if $i\notin v$ where $X'_i$ is an independent copy of $X_i$. We then set 
\[
Y^{v}:=f(X^v).
\]

The next lemma \cite[Lemma 1.2]{jaal} shows how to express $\myS^{{\bf u}}$ in terms of covariances. This will lead to a natural estimator: 

\begin{lemma}\label{lemma:cov}
For any $u \subset I_p$, one has
\begin{equation}\label{sobol_cov}
\Var (\mathbb{E}(Y|X_i, i \in u))=\Cov\left(Y,Y^{\bf u}\right)  .
\end{equation}
\end{lemma}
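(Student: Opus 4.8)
The plan is to unwind the definition of $Y^u$ and to exploit the mutual independence of the inputs $X_1,\dots,X_p$ together with their independent copies. Write $X_u:=(X_i)_{i\in u}$, $X_{-u}:=(X_i)_{i\notin u}$ and $X'_{-u}:=(X'_i)_{i\notin u}$, so that $Y=f(X_u,X_{-u})$ and $Y^u=f(X_u,X'_{-u})$, where the three blocks $X_u$, $X_{-u}$ and $X'_{-u}$ are mutually independent and $X'_{-u}$ has the same law as $X_{-u}$. A first, immediate observation is that $(X_u,X'_{-u})$ and $(X_u,X_{-u})$ have the same distribution, hence $Y^u$ has the same law as $Y$ and in particular $\mathbb{E}(Y^u)=\mathbb{E}(Y)$; consequently $\Cov(Y,Y^u)=\mathbb{E}(YY^u)-(\mathbb{E}Y)^2$. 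Since $Y$ is square integrable, $YY^u$ is integrable by Cauchy--Schwarz and all the conditional expectations below are well defined.

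The key step is to compute $\mathbb{E}(YY^u)$ by conditioning on $X_u$. Conditionally on $X_u$, the variables $X_{-u}$ and $X'_{-u}$ remain independent (they are independent of $X_u$ and of each other), so the conditional expectation of the product factorises:
\[
\mathbb{E}\big(YY^u\mid X_u\big)=\mathbb{E}\big(f(X_u,X_{-u})\mid X_u\big)\cdot\mathbb{E}\big(f(X_u,X'_{-u})\mid X_u\big).
\]
Because $X_{-u}$ and $X'_{-u}$ share the same distribution and both are independent of $X_u$, the two factors are equal and coincide with $\mathbb{E}(Y\mid X_u)=\mathbb{E}(Y\mid X_i,\,i\in u)$. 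Taking expectations yields $\mathbb{E}(YY^u)=\mathbb{E}\big(\mathbb{E}(Y\mid X_i,\,i\in u)^2\big)$.

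Putting the two steps together,
\[
\Cov(Y,Y^u)=\mathbb{E}\big(\mathbb{E}(Y\mid X_i,\,i\in u)^2\big)-(\mathbb{E}Y)^2,
\]
and the tower property $\mathbb{E}\big(\mathbb{E}(Y\mid X_i,\,i\in u)\big)=\mathbb{E}(Y)$ identifies the right-hand side with $\Var\big(\mathbb{E}(Y\mid X_i,\,i\in u)\big)$, which is \eqref{sobol_cov}. The only delicate point — and the main obstacle, though a mild one — is the rigorous justification of the factorisation of $\mathbb{E}(YY^u\mid X_u)$: it rests on the mutual independence of the three blocks $X_u$, $X_{-u}$, $X'_{-u}$ (to split the conditional expectation of a product into a product of conditional expectations) and then on the equality in law of $X_{-u}$ and $X'_{-u}$, jointly with their independence from $X_u$, to identify the second conditional expectation with $\mathbb{E}(Y\mid X_u)$.
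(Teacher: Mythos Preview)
Your argument is correct and is the standard derivation: condition on $X_u$, use the mutual independence of the three blocks $X_u$, $X_{-u}$, $X'_{-u}$ to factor the conditional expectation, identify both factors with $\mathbb{E}(Y\mid X_u)$, and take expectations. The paper does not reproduce a proof of this lemma; it merely cites \cite[Lemma~1.2]{jaal}, where the same argument appears.
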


An estimator with a close expression has been considered in \cite{homma1996importance}.

\tbf{Notation}\\ 
From now on, we will denote $\Var (Y)$ by $V$, $\Cov(Y,Y^{\bf u})$ by $C_u$ and $\overline{Z}_N$ the empirical mean of any $N$-sample $(Z_1,\ldots,Z_N)$ of $Z$.\\\\
\tbf{A first estimation for $\myS^{{\bf u}}$.} In view of Lemma \ref{lemma:cov}, we are now able to define a first natural estimator of $\myS^{{\bf u}}$ (all sums are taken for $i$ from 1 to $N$): 

\begin{align}\label{estsobolgen}
S^{\bf u}_{N,\mathrm{Cl}}&= \left( \frac{ \frac{1}{N} \sum  Y_i   Y_i^{{u_1}} - \left(\frac{1}{N} \sum Y_i\right) \left(\frac{1}{N}\sum  Y_i^{{u_1}}\right) }{ \frac{1}{N}\sum  Y_i^2 - \left( \frac{1}{N} \sum  Y_i \right)^2 },\ldots, \frac{ \frac{1}{N} \sum  Y_i   Y_i^{{u_k}} - \left(\frac{1}{N} \sum Y_i\right) \left(\frac{1}{N}\sum  Y_i^{{u_k}}\right) }{ \frac{1}{N}\sum  Y_i^2 - \left( \frac{1}{N} \sum  Y_i \right)^2 }\right). 
 \end{align}

These estimators have been considered in \cite{homma1996importance}, where it has been showed to be practically efficient estimators.\\

\tbf{A second estimation for $\myS^{{\bf u}}$.} Since the observations consist in $(Y_i,Y_i^{u_1}, \ldots ,Y_i^{u_k})_{(1\leq i\leq N)}$, a more precise estimation of the first and second moments can be done and we are able to define a second estimator of $\myS^{{\bf u}}$ taking into account all the available information.
Define 
\begin{align*}
Z_i^{\bf u}&=\frac1{k+1}\left( Y_i+\sum_{j=1}^k  Y_i^{{u_j}}\right), \;\;\;
M_i^{\bf u}=\frac1{k+1}\left( Y_i^2+\sum_{j=1}^k  (Y_i^{{u_j}})^2\right) .
\end{align*}
The second estimator is then defined as
\begin{align}\label{esteffgen}
T^{\bf u}_{N, \mathrm{Cl}}&= \left( \frac{ \frac{1}{N} \sum  Y_i    Y_i^{{u_1}} - \left(\frac{1}{2N} \sum (Y_i+Y_i^{u_1})\right)^2 }{ \frac{1}{N}\sum  M_i^{\bf u} -  \left(\frac{1}{N} \sum Z_i^{\bf u}\right)^2 },\ldots, \frac{ \frac{1}{N} \sum  Y_i    Y_i^{{u_k}} - \left(\frac{1}{2N} \sum (Y_i+Y_i^{u_k})\right)^2 }{ \frac{1}{N}\sum  M_i^{\bf u} -  \left(\frac{1}{N} \sum Z_i^{\bf u}\right)^2 }\right) .
 \end{align}
This estimator (in the $k=1$ case) was first introduced by Monod in \cite{Monod2006} and Janon et al. studied its asymptotic properties (CLT, efficiency) in \cite{jaal}. In \cite{owen1,owen2} Owen introduces new estimators for Sobol indices and compares numerically their performances. The delta method can also be used on these pick-freeze estimators to derive their asymptotic properties.

\begin{rem}
One could use all the information available in the sample by defining the following estimator:
\[ 
\left( \frac{ \frac{1}{N} \sum  Y_i    Y_i^{{u_1}} - \left(\frac{1}{N} \sum Z_i^u\right)^2 }{ \frac{1}{N}\sum  M_i^u -  \left(\frac{1}{N} \sum Z_i^u\right)^2 },\ldots, \frac{ \frac{1}{N} \sum  Y_i    Y_i^{{u_k}} - \left(\frac{1}{N} \sum Z_i^u \right)^2 }{ \frac{1}{N}\sum  M_i^u -  \left(\frac{1}{N} \sum Z_i^u \right)^2 }\right) .
\]
However, our empirical studies show that this estimator has a larger variance than $ T^{\bf u}_{N, \mathrm{Cl}} $.
\end{rem}

\section{Joint CLT for Sobol  index estimates with applications to significance tests}
\label{s:joint}
\subsection{Main results}

\begin{theorem}
\label{CLTgen}
Assume that $\E(Y^4)<\infty$. Then:
\begin{enumerate}
\item 
 \begin{equation}\label{ChopinS}
\sqrt{N}\left(S_{N,\mathrm{Cl}}^{\bf u} - S^{\bf u}_{\mathrm{Cl}}\right)
\overset{\mathcal{L}}{\underset{N\to\infty}{\rightarrow}}\mathcal{N}_k\left(0,\Gamma_{{\bf u},S} \right)
\end{equation}
where $\Gamma_{{\bf u},S}=\left( (\Gamma_{{\bf u},S})_{l,j} \right)_{1 \leq l, j  \leq k}$ with 
\[
 (\Gamma_{{\bf u},S})_{l,j}= \frac{ \Cov (Y Y^{u_l},Y Y^{u_j})-S^{u_l}_{\mathrm{Cl}}\Cov (Y Y^{u_j},Y^2)-S^{u_j}_{\mathrm{Cl}}   \Cov(YY^{u_l},Y^2)+ S^{u_j}_{\mathrm{Cl}} S^{u_l}_{\mathrm{Cl}}\Var(Y^2) }{ \left(\Var(Y)\right)^2 }
\]
\item 
\begin{equation}\label{ChopinT}
\sqrt{N}\left(T_{N,\mathrm{Cl}}^{\bf u} - S^{\bf u}_{\mathrm{Cl}} \right)
\overset{\mathcal{L}}{\underset{N\to\infty}{\rightarrow}}\mathcal{N}_k\left(0,\Gamma_{{\bf u},T} \right)
\end{equation}
where $\Gamma_{{\bf u},T}=\left( (\Gamma_{{\bf u},T})_{l,j} \right)_{1 \leq l, j  \leq k}$ with 
\[
 (\Gamma_{{\bf u},T})_{l,j}= \frac{\Cov (Y Y^{u_l},Y Y^{u_j})-S^{u_l}_{\mathrm{Cl}}\Cov (Y Y^{u_j},M^{\bf u})-S^{u_j}_{\mathrm{Cl}}   \Cov(YY^{u_l},M^{\bf u})+ S^{u_j}_{\mathrm{Cl}} S^{u_l}_{\mathrm{Cl}}\Var(M^{\bf u})}{\left(\Var(Y)\right)^2}.
\]
\end{enumerate}

\end{theorem}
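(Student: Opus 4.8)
The plan is to write each of the two estimators, up to a remainder negligible at rate $\sqrt N$, as a single fixed $C^1$ map applied to a vector of empirical means of an i.i.d.\ sequence, and then to combine the multivariate central limit theorem with the delta method; the joint convergence of all $k$ coordinates then comes for free from the multivariate CLT. I would first reduce to the case $\E(Y)=0$: all of $S^{\bf u}_{\mathrm{Cl}}$, $S^{\bf u}_{N,\mathrm{Cl}}$ and $T^{\bf u}_{N,\mathrm{Cl}}$ are unchanged under $Y\mapsto Y+c$ (the denominators are empirical-variance-type and the numerators empirical-covariance-type quantities, and a direct check shows each is translation invariant), so we may assume $\E(Y)=0$. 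With this normalization, since $X^{u_j}\overset{d}{=}X$ and by Lemma~\ref{lemma:cov}: $\E(Y^{u_j})=\E(Z^{\bf u})=0$, $\E((Y^{u_j})^2)=\Var(Y)=V\neq 0$, $\E(M^{\bf u})=V$, and $\E(YY^{u_j})=\Cov(Y,Y^{u_j})=:C_{u_j}$ with $C_{u_j}/V=S^{u_j}_{\mathrm{Cl}}$.

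Under $\E(Y)=0$ the ordinary CLT gives $\overline Y_N,\ \overline{Y^{u_l}}_N,\ \overline{Z^{\bf u}}_N=O_{\P}(N^{-1/2})$, so the quadratic ``centering'' terms $\overline Y_N\,\overline{Y^{u_l}}_N$, $\bigl(\tfrac12\overline{Y+Y^{u_l}}_N\bigr)^2$ and $\bigl(\overline{Z^{\bf u}}_N\bigr)^2$ are all $O_{\P}(N^{-1})=o_{\P}(N^{-1/2})$. Set $\psi(a_1,\dots,a_k,d):=(a_1/d,\dots,a_k/d)$, $U_i:=(Y_iY_i^{u_1},\dots,Y_iY_i^{u_k},Y_i^2)$ and $U'_i:=(Y_iY_i^{u_1},\dots,Y_iY_i^{u_k},M_i^{\bf u})$. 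Expanding a difference of two ratios whose denominators both converge to $V\neq 0$ and invoking Slutsky's lemma gives
\[
\sqrt N\bigl(S^{\bf u}_{N,\mathrm{Cl}}-\psi(\overline U_N)\bigr)\overset{\P}{\longrightarrow}0,\qquad \sqrt N\bigl(T^{\bf u}_{N,\mathrm{Cl}}-\psi(\overline{U'}_N)\bigr)\overset{\P}{\longrightarrow}0 .
\]
Moreover $\E(U_1)=\E(U'_1)=(C_{u_1},\dots,C_{u_k},V)=:\theta$ and $\psi(\theta)=S^{\bf u}_{\mathrm{Cl}}$.

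Since $\E(Y^4)<\infty$, Cauchy--Schwarz gives $\E\|U_1\|^2<\infty$ and $\E\|U'_1\|^2<\infty$, so the multivariate CLT shows that $\sqrt N(\overline U_N-\theta)$ and $\sqrt N(\overline{U'}_N-\theta)$ converge in law to $\mathcal N_{k+1}(0,\Sigma)$ and $\mathcal N_{k+1}(0,\Sigma')$, with $\Sigma=\Cov(U_1)$ and $\Sigma'=\Cov(U'_1)$. As $V\neq 0$, $\psi$ is $C^1$ near $\theta$, and its Jacobian $J$ at $\theta$ has $l$-th row $V^{-1}(e_l^{\top},-S^{u_l}_{\mathrm{Cl}})$, where $e_l\in\R^k$ is the $l$-th basis vector. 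The delta method, combined with the approximations of the previous paragraph and Slutsky's lemma, then yields \eqref{ChopinS} and \eqref{ChopinT} with $\Gamma_{{\bf u},S}=J\Sigma J^{\top}$ and $\Gamma_{{\bf u},T}=J\Sigma' J^{\top}$; carrying out the product,
\[
(J\Sigma J^{\top})_{l,j}=\frac{\Sigma_{l,j}-S^{u_l}_{\mathrm{Cl}}\,\Sigma_{j,k+1}-S^{u_j}_{\mathrm{Cl}}\,\Sigma_{l,k+1}+S^{u_l}_{\mathrm{Cl}}S^{u_j}_{\mathrm{Cl}}\,\Sigma_{k+1,k+1}}{V^{2}},
\]
and likewise with $\Sigma'$. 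Reading off $\Sigma_{l,j}=\Cov(YY^{u_l},YY^{u_j})$, $\Sigma_{l,k+1}=\Cov(YY^{u_l},Y^2)$, $\Sigma_{k+1,k+1}=\Var(Y^2)$ (resp.\ $\Sigma'_{l,k+1}=\Cov(YY^{u_l},M^{\bf u})$, $\Sigma'_{k+1,k+1}=\Var(M^{\bf u})$) recovers exactly the expressions for $(\Gamma_{{\bf u},S})_{l,j}$ and $(\Gamma_{{\bf u},T})_{l,j}$ in the statement.

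There is no genuinely hard step here: it is a joint CLT obtained through the delta method, and the dependence between the various Sobol index estimators is carried automatically by the off-diagonal entries of $\Sigma$ (resp.\ $\Sigma'$). The two points that require care are the verification that the quadratic centering corrections are $o_{\P}(N^{-1/2})$ — so that $S^{\bf u}_{N,\mathrm{Cl}}$ and $T^{\bf u}_{N,\mathrm{Cl}}$ may legitimately be replaced by $\psi(\overline U_N)$ and $\psi(\overline{U'}_N)$ without affecting the limit — and the elementary but lengthy bookkeeping in the matrix product $J\Sigma J^{\top}$; one should also keep in mind that the displayed covariance formulas are those obtained under the normalization $\E(Y)=0$ fixed at the outset.
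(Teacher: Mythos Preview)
Your proof is correct and follows essentially the same route as the paper: reduce to $\E(Y)=0$ by translation invariance, apply the multivariate CLT to a vector of i.i.d.\ summands, and conclude via the delta method. The only difference is cosmetic: the paper keeps the first-moment coordinates $Y_i,Y_i^{u_1},\dots,Y_i^{u_k}$ in its $(2k+2)$-dimensional vector $W_i$ and applies the delta method to the exact map $g$ defining the estimator, discovering a posteriori that the corresponding partial derivatives vanish at $\E(W)$; you instead dispose of these coordinates a priori by noting that the quadratic centering terms are $O_{\P}(N^{-1})$, which lets you work with the leaner $(k+1)$-dimensional vector $U_i$ and the simpler map $\psi$. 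Both paths lead to the same Jacobian and the same $J\Sigma J^{\top}$ formula.
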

\subsection{Some particular cases}
\begin{enumerate}
\item Assume $k=p$, $u=\left(\{1\},\ldots,\{p\}\right)$ and  $\E(Y^4)<\infty$. We denote $Y_{i}^{\{j\}}$ by $Y_{i}^{j}$. Here
\[
\myS^{{\bf u}}=\left(\frac{\Var (\mathbb{E}(Y|X_1))}{\Var(Y)}, \ldots,\frac{\Var (\mathbb{E}(Y|X_p))}{\Var(Y)}\right)
\] 
and
\begin{align*}
T^{\bf u}_{N, \mathrm{Cl}}&= \left( \frac{ \frac{1}{N} \sum  Y_i    Y_i^{1} -\left(\frac{1}{2N} \sum (Y_i+Y_i^{1})\right)^2 }{ \frac{1}{N}\sum  M_i^{\bf u} -  \left(\frac{1}{N} \sum Z_i^{\bf u}\right)^2 },\ldots, \frac{ \frac{1}{N} \sum  Y_i    Y_i^{p} -\left(\frac{1}{2N} \sum (Y_i+Y_i^{p})\right)^2 }{ \frac{1}{N}\sum  M_i^{\bf u} -  \left(\frac{1}{N} \sum Z_i^{\bf u}\right)^2 }\right) .
 \end{align*}
 The CLT becomes 
 \begin{equation*}
\sqrt{N}\left(T_{N,\mathrm{Cl}}^{\bf u} - S^{\bf u}_{\mathrm{Cl}} \right)
\overset{\mathcal{L}}{\underset{N\to\infty}{\rightarrow}}\mathcal{N}_p\left(0,\Gamma_{{\bf u},T} \right)
\end{equation*}
where $\Gamma_{{\bf u},T}=\left( (\Gamma_{{\bf u},T})_{l,j} \right)_{1 \leq l, j  \leq k}$ with 
$$\left(\Var(Y)\right)^2 \, (\Gamma_{{\bf u},T})_{l,j}= \Cov (Y Y^{l},Y Y^{j})-S^{l}_{\mathrm{Cl}}\Cov (Y Y^{j},M^{\bf u})-S^{j}_{\mathrm{Cl}}   \Cov(YY^{l},M^{\bf u})+ S^{j}_{\mathrm{Cl}} S^{l}_{\mathrm{Cl}}\Var(M^{\bf u}).$$

\item We can obviously have a CLT for any index of order 2. Indeed if we take  $k=1$ and  $(i,j)\in\{1,\ldots,p\}^{2}$ with $i\not=j$ and $u=\{i,j\}$. We get $Z^{\bf u}=\frac{1}{2}\left(Y+Y^{\bf u}\right)$ and $M^{\bf u}=\frac{1}{2}\left(Y^2+(Y^{\bf u})^2\right)$; thus

$$ \myS^{\bf u}=\frac{\Var (\mathbb{E}(Y|X_i,X_{j}))}{\Var(Y)}\ \mbox{and}\  
T^{\bf u}_{N, \mathrm{Cl}}=  \frac{ \frac{1}{N} \sum  Y_i    Y_i^{\bf u} - \left(\frac{1}{2N} \sum (Y_i+Y_i^{\bf u})\right)^2 }{ \frac{1}{2N}\sum \left(Y^2+(Y^{\bf u})^2\right) -  \left(\frac{1}{2N} \sum (Y_i+Y_i^{\bf u})\right)^2 }.
$$
 The CLT becomes 
 \begin{equation*}
\sqrt{N}\left(T_{N,\mathrm{Cl}}^{\bf u} - S^{\bf u}_{\mathrm{Cl}} \right)
\overset{\mathcal{L}}{\underset{N\to\infty}{\rightarrow}}\mathcal{N}_1\left(0,\Gamma_{{\bf u},T} \right)
\end{equation*}
with 
$$\left(\Var(Y)\right)^2 \, (\Gamma_{{\bf u},T})= \Var(Y Y^{\bf u})-2S^{{\bf u}}_{\mathrm{Cl}}\Cov (Y Y^{\bf u},Y^2)+\frac{\left(S^{{\bf u}}_{\mathrm{Cl}}\right)^2}{2} \left(\Var(Y^2)+\Cov(Y^2,(Y^{\bf u})^2)\right) .$$
\item One can also straightforwardly deduce the joint distribution of the vector of all indices of order 2. For example, if $p=3$ take $k=3$ and  ${\bf u}=(\{1,2\},\{1,3\},\{2,3\})$ and apply Theorem \ref{CLTgen}.
\end{enumerate}

\subsection{Proof of Theorem \ref{CLTgen}}
Since $S_{N, \mathrm{Cl}}^{\bf u}$  and $T_{N, \mathrm{Cl}}^{\bf u}$ are invariant by any centering (translation) of the $Y_i$'s and $Y_i^{u_j}$'s for $j=1, \ldots , k$, we can simplify the next calculations translating by $\E(Y)$. For the sake of simplicity, $Y_i$ and $Y_i^{u_j}$ now denote the centered  random variables.

{\bf Proof of (\ref{ChopinS})~:}\\
Recall that
\beq
S_{N,\mathrm{Cl}}^{\bf u} - S^{\bf u}_{\mathrm{Cl}}%&=&\left(\f andrac{M'_{N,1}(S^{\bf u}_{\mathrm{Cl}})}{M''_1(S^{\bf u}_{\mathrm{Cl}})},\frac{M'_{N,2}(S^Z)}{M''_2(S^Z)}\right)\\
=\left(\frac{\frac1N \sum Y_iY_i^{u_1}-\left(\frac1N \sum Y_i\right) \left(\frac1N \sum Y_i^{u_1}\right)}{\frac1N \sum Y_i^2-(\frac1N \sum Y_i)^2}-S^{u_1}_{\mathrm{Cl}},\ldots , \frac{\frac1N \sum Y_iY_i^{u_k}-\left(\frac1N \sum Y_i\right) \left( \frac1N \sum Y_i^{u_k}\right)}{\frac1N \sum Y_i^2-(\frac1N \sum Y_i)^2}-S^{u_k}_{\mathrm{Cl}}\right).
\eeq
Let $W_i=(Y_iY_i^{u_j},j =1, \ldots ,k, Y_i,Y_i^{u_j},j=1 \ldots ,k , Y_i^2)^t$ ($i=1,\ldots$) and $g$ the mapping from $\R^{2k+2}$ to $\R^{k}$ defined by
$$g(x_1, \ldots , x_k,y,y_1,\ldots , y_k,z)=\left(\frac{x_1-yy_1}{z-y^2},\ldots ,\frac{x_k-yy_k}{z-y^2}\right).$$

Let $\Sigma$ denote the covariance matrix of $W_i$. The vectorial central limit theorem implies that
%Let's set up some notations: 
%\[ E=\mathbb{E}(Y), V=\Var(Y), C_Z=\Cov(Y,Y^Z), C_X=\Cov(Y,Y^{\bf u}), C=\Cov(Y^Z,Y^{\bf u}), W=(YY^{\bf u},Y,Y^2,Y^{\bf u},YY^Z,Y^Z)^t\]
% and 
%$$\Sigma=\left(\begin{array}{cccccc}
%\Var(YY^{\bf u}) & \Cov(YY^{\bf u},Y) & \Cov(YY^{\bf u},Y^2) & \Cov(YY^{\bf u},Y^{\bf u}) & \Cov(YY^{\bf u},YY^Z) & \Cov(YY^{\bf u},Y^Z)\\
%\Cov(Y,YY^{\bf u}) & V & \Cov(Y,Y^2) & C_X & \Cov(Y,YY^Z) & C_Z\\
%\Cov(Y^2,YY^{\bf u}) & \Cov(Y^2,Y) & \Var(Y^2) & \Cov(Y^2,Y^{\bf u}) & \Cov(Y^2,YY^Z) & \Cov(Y^2,Y^Z)\\
%\Cov(Y^{\bf u},YY^{\bf u}) & C_X & \Cov(Y^{\bf u},Y^2) & V & \Cov(Y^{\bf u},YY^Z) & C\\
%\Cov(YY^Z,YY^{\bf u}) & \Cov(YY^Z,Y) & \Cov(YY^Z,Y^2) & \Cov(YY^Z,Y^{\bf u}) & \Var(YY^Z)  & \Cov(YY^Z,Y^Z)\\
%\Cov(Y^Z,YY^{\bf u}) & C_Z & \Cov(Y^Z,Y^2) & C & \Cov(Y^Z,YY^Z) & V\\
%\end{array}
%\right)$$
%We apply the so-called Delta method \cite{van2000asymptotic} to $W$ and $\varphi$. First, the following central limit theorem holds:
\begin{equation*}
\sqrt{N}\left(\frac1N \sum W_i-\E (W) \right)\overset{\mathcal{L}}{\underset{N\to\infty}{\rightarrow}} \mathcal{N}_{2k+2}\left(0,\Sigma\right)
\end{equation*}
We then apply the so-called Delta method \cite{van2000asymptotic} to $W$ and $g$ so that
\begin{equation*}
\sqrt{N}\left(g(\overline{W}_N)-g(\E(W))\right)\overset{\mathcal{L}}{\underset{N\to\infty}{\rightarrow}} \mathcal{N}\left(0,J_{g}(\E(W))\Sigma J_{g}(\E(W))^t\right)
\end{equation*}
with $J_{g}(\E(W))$ the Jacobian of $g$ at point $\E (W)$.

Define $(g_1, \ldots , g_k):=\varphi$. For $i=1, \ldots , k$, $j=1, \ldots , k$,
$$\left\{\begin{array}{l}
 \frac{\partial g_j}{\partial x_i} (\E(W))= \frac{1}{V} \delta_{i,j}\\
 \frac{\partial g_j}{\partial y} (\E(W))=0\\
 \frac{\partial g_j}{\partial y_i} (\E(W))=0\\
 \frac{\partial g_j}{\partial z} (\E(W))=-\frac{S^{u_j}_{\mathrm{Cl}}}{V}
\end{array}
\right.$$
with $\delta_{i,i}=1$ and $\delta_{i,j}=0$ if $i\neq j$. Thus $\Gamma_{{\bf u},S}= J_{g}(\E(W))\Sigma J_{g}(\E(W))^t$ is as stated in Theorem \ref{CLTgen}.\\ 

{\bf Proof of (\ref{ChopinT})~:}\\
The proof is similar to the one of (\ref{ChopinS}). We now define
$W_i=(Y_iY_i^{u_j},j=1, \ldots , k,Y_i,Y_i^{u_j},j=1 \ldots ,k , \overline{(Y_i^{\bf u})^2})^t$. We apply the delta method to $g$ from $\mathbb{R}^{2k+2}$ into $\mathbb{R}^k$ defined by
$$g(x_1, \ldots, x_k, y,y_1,\ldots , y_k,z)=\left(\frac{x_1-\left(\frac{y+y_1}{2}\right)^2}{z-\left(\frac{y+y_1+\ldots+y_k}{k+1}\right)^2}, \ldots , \frac{x_k-\left(\frac{y+y_k}{2}\right)^2}{z-\left(\frac{y+y_1+\ldots+y_k}{k+1}\right)^2}\right).$$
For $i=1, \ldots , k$, $j=1, \ldots , k$,
$$\left\{\begin{array}{l}
 \frac{\partial g_j}{\partial x_i}u (\E(W))= \frac{1}{V} \delta_{i,j}\\
 \frac{\partial g_j}{\partial y} (\E(W))=0\\
 \frac{\partial g_j}{\partial y_i} (\E(W))= 0\\
 \frac{\partial g_j}{\partial z} (\E(W))=-\frac{S^{u_j}_{\mathrm{Cl}}}{V}.
\end{array}
\right.$$

\subsection{Significance tests}

In order to simplify the notation we will write the vectors $ \myS^{\bf {u}}$ as column vectors.
In this section, we give a general procedure to build significance tests of level $\alpha$ and then illustrate this procedure on two examples.

Let ${\bf u}:=(u_1,\ldots,u_k)$ so that for any $i=1,\ldots,k$, $u_i$ is a subset of $I_p:=\{1,\ldots,p\}$.  Similarly, let ${\bf v}:=(v_1,\ldots,v_l)$ and  ${\bf w}:=(w_1,\ldots,w_l)$ be $l$ be so that for any $i=1,\ldots,l$, $v_i \subseteq I_p$ and $w_i \subseteq I_p$.

  Consider the following general testing problem
$$H_{0}: \myS^{\bf {u}}=0 \mathrm{\ and\ }  \myS^{\bf v}= \myS^{\bf w} \quad \textrm{against} \quad H_{1}: H_{0}\text{ is not true}.$$ 
\begin{rem}
Note that one can also test 
$$H_{0}: \myS^{\bf {u}}\leq s \quad \textrm{against} \quad H_{1}:  \myS^{\bf {u}}>s,$$ 
or 
$$H_{0}:  \myS^{\bf u}\leq \myS^{\bf v} \quad \textrm{against} \quad H_{1}:  \myS^{\bf u}>\myS^{\bf v}.$$ 
\end{rem}

Appling Theorem \ref{CLTgen} we have
\begin{equation}\label{eqtest}
G_{N}:=\sqrt N\left(
\begin{pmatrix}
 S^{\bf {u}}_{N,\mathrm{Cl}}\\
 S^{\bf {v}}_{N,\mathrm{Cl}}- S^{\bf {w}}_{N,\mathrm{Cl}}
\end{pmatrix}
-
\begin{pmatrix}
 S^{\bf {u}}_{\mathrm{Cl}}\\
  S^{\bf {v}}_{\mathrm{Cl}}- S^{\bf {w}}_{\mathrm{Cl}}
\end{pmatrix}
\right)
\overset{\mathcal{L}}{\underset{N\to\infty}{\rightarrow}}\mathcal{N}_{k+l}\left(0,\Gamma \right).
\end{equation}
Since we have an explicit expression of $\Gamma$ we may build an estimator $\Gamma_{N} $ of $\Gamma$ thanks to empirical means. Note that $\left(\Gamma_{N}\right)_{N}$ converges a.s. to $\Gamma$. Define 
\[\widetilde G_N:=\sqrt N \begin{pmatrix}
 S^{\bf {u}}_{N,\mathrm{Cl}}\\
 S^{\bf {v}}_{N,\mathrm{Cl}}- S^{\bf {w}}_{N,\mathrm{Cl}}
\end{pmatrix}. \]
Then:
\[
G_N=\widetilde G_N-\begin{pmatrix}
 S^{\bf {u}}_{\mathrm{Cl}}\\
  S^{\bf {v}}_{\mathrm{Cl}}- S^{\bf {w}}_{\mathrm{Cl}}
\end{pmatrix}.
\]

\begin{cor}
Under $H_0$, $\widetilde G_N \overset{\mathcal{L}}{\underset{N\to\infty}{\rightarrow}}\mathcal{N}_{k+l}\left(0,\Gamma \right)$.\\
Under $H_1$, $|\widetilde G_N(1)| + |\widetilde G_N(2)| \overset{a.s.}{\underset{N\to\infty}{\rightarrow}} \infty$.
%Under $H_{(S^{\bf {u}}_{\mathrm{Cl}},  S^{\bf {v}}_{\mathrm{Cl}}- S^{\bf {w}}_{\mathrm{Cl}})}$, $G_N \overset{\mathcal{L}}{\underset{N\to\infty}{\rightarrow}}\mathcal{N}_{k+l}\left(0,\Gamma \right)$.
\end{cor}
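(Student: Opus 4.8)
The plan is to derive both assertions of the Corollary directly from the central limit theorem in Theorem \ref{CLTgen} together with the decomposition $G_N=\widetilde G_N-\theta_0$, where we abbreviate $\theta_0:=\left(S^{\bf u}_{\mathrm{Cl}},\,S^{\bf v}_{\mathrm{Cl}}-S^{\bf w}_{\mathrm{Cl}}\right)^t\in\R^{k+l}$. The key observation is that \eqref{eqtest} is a statement about $G_N$ that holds unconditionally (it is simply the CLT applied to the concatenated estimator, which is legitimate since the two blocks $S^{\bf u}_{N,\mathrm{Cl}}$ and $S^{\bf v}_{N,\mathrm{Cl}}-S^{\bf w}_{N,\mathrm{Cl}}$ are smooth functions of the same empirical means, so Theorem \ref{CLTgen} applies to the stacked vector with an appropriate joint covariance $\Gamma$). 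The role of $H_0$ versus $H_1$ is only to control the deterministic shift $\theta_0$.

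First I would treat the behaviour under $H_0$. By definition, $H_0$ asserts exactly that $\myS^{\bf u}=0$ and $\myS^{\bf v}=\myS^{\bf w}$, i.e. $\theta_0=0$. Hence under $H_0$ we have the identity $\widetilde G_N=G_N$, and the conclusion $\widetilde G_N\overset{\mathcal L}{\to}\mathcal N_{k+l}(0,\Gamma)$ is literally \eqref{eqtest}. This is immediate and requires no further work.

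Next I would treat the behaviour under $H_1$. Here $\theta_0\neq 0$, so at least one coordinate of $\theta_0$ is nonzero; in the notation of the statement this means $|\theta_0(1)|+|\theta_0(2)|>0$ where the arguments $(1),(2)$ refer to the two stacked blocks (or, more precisely, $\theta_0$ has a nonzero entry in one of the two blocks). Writing $\widetilde G_N=G_N+\sqrt N\,\theta_0$, the strong law of large numbers gives $G_N/\sqrt N=(\overline W_N\text{-consistency})\to 0$ almost surely — indeed $S^{\bf u}_{N,\mathrm{Cl}}\to\myS^{\bf u}$ and $S^{\bf v}_{N,\mathrm{Cl}}-S^{\bf w}_{N,\mathrm{Cl}}\to\myS^{\bf v}-\myS^{\bf w}$ a.s. by continuity of $g$ and the SLLN applied to the empirical means $\overline W_N$, so $G_N/\sqrt N\to 0$ a.s. Therefore $\widetilde G_N/\sqrt N\to\theta_0$ a.s., and since some entry of $\theta_0$ is nonzero, the corresponding entry of $\widetilde G_N$ diverges in absolute value at rate $\sqrt N$; in particular $|\widetilde G_N(1)|+|\widetilde G_N(2)|\to\infty$ a.s.

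I do not expect a genuine obstacle here: the only point that needs a line of care is the a.s.\ consistency used under $H_1$, which follows from the SLLN for $\overline W_N$ and the continuity of the map $g$ at $\E(W)$ (valid because $\Var(Y)\neq 0$, so the denominators stay bounded away from $0$), exactly as in the proof of Theorem \ref{CLTgen}. One should also note in passing that the estimator $\Gamma_N$ built from empirical means converges a.s.\ to $\Gamma$ (again by the SLLN and continuity), so that in practice $\widetilde G_N$ may be studentized by $\Gamma_N^{-1/2}$; this is what makes the Corollary usable for an actual level-$\alpha$ test, but it is not needed for the two displayed convergences themselves.
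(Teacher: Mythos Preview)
Your argument is correct and is exactly the (implicit) reasoning the paper intends: the corollary is stated without proof because it is immediate from \eqref{eqtest} together with the relation between $G_N$ and $\widetilde G_N$, and you have simply written out the obvious details (under $H_0$ the shift vanishes; under $H_1$ the SLLN and continuity of $g$ give $G_N/\sqrt N\to 0$ a.s., so the nonzero $\sqrt N\,\theta_0$ term dominates). One small slip: in your opening line you write $G_N=\widetilde G_N-\theta_0$, but as you correctly use later the relation is $\widetilde G_N=G_N+\sqrt N\,\theta_0$ (the paper's displayed identity has the same missing $\sqrt N$).
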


This corollary allows us to construct several tests. It is a well-known fact that in the case of a vectorial null hypothesis "there exists no uniformly most powerful test, not even among the unbiased tests" (see Chapter 15 in \cite{van2000asymptotic}). In practice, we return to the dimension 1 introducing a function $F:\R^{k+l}\to \R$ and testing $H_0(F): F(h)=0$ (respectively $H_1(F): F(h)\neq 0$) instead of $H_0: h=0$ (resp. $H_1: h\neq 0$). The choice of a reasonable test "depends on the alternatives at which we wish a high power". 

\begin{rem}\label{rem_transf_lin}
If we take as test statistic $T_{N}=A \widetilde G_N$ where $A$ is a linear form defined on $\R^{l+k}$, under $H_0$, $T_{N}\overset{\mathcal{L}}{\underset{N\to\infty}{\rightarrow}}\mathcal{N}\left(0,A\Gamma A' \right)$. Replacing $\Gamma$ by $\Gamma_{N}$ and using Slutsky's lemma we get
 $$(A\Gamma_{N} A')^{-1/2}T_{N}\overset{\mathcal{L}}{\underset{N\to\infty}{\rightarrow}}\mathcal{N}\left(0,1\right).$$
 Thus we
 reject $H_{0}$ if $(A\Gamma_{N} A')^{-1/2}T_{N}\geq z_{\alpha}$ where $z_{\alpha}$ is the $1-\alpha$ quantile of a standard  Gaussian random variable.\\
 One can have a similar result when $A$ is not anymore linear but only $C^{1}$ by applying the so-called Delta method.
\end{rem}

\subsubsection{Numerical applications: toy examples} 
\paragraph{Example 1}
In this  first  toy example, we compare 5 different test statistics through their power function.
Let $X=(X_1,X_2)\sim \mathcal{N}(0,I_2)$,  and  
$$Y=f(X)=\lambda_1 X_1+\lambda_1 X_2+\lambda_{2} X_{1}X_2,$$
with $2\lambda_1^{2}+\lambda_2^{2}=1$.
We consider here the following testing problem
$$H_{0}: \myS^{\bf {1}}=  \myS^{\bf 2}=\lambda_{1}^{2} =0\quad \textrm{against} \quad H_{1} :\ \lambda_{1}\neq 0.$$ 

Then, computations lead to
\begin{align*}
\Gamma(1,1)&= \Gamma(2,2)=3-2\lambda_1^{2}-11\lambda_1^{4}+24\lambda_1^{6}-24\lambda_1^{8}\\
\Gamma(2,1)&=\Gamma(1,2)=-7\lambda_1^{4}+24\lambda_1^{6}-24\lambda_1^{8}.
\end{align*}
The Gaussian limit in Theorem \ref{CLTgen} is $\mathcal N_2(0,3Id_2)$ under $H_0$ while it  is asymptotically distributed as $\mathcal N_2(0,\Gamma)$ under $H_{{1}}$.\\\\

\tbf{Test 1}: we take as test statistic  $T_{N,1}=\widetilde G_N(1)+\widetilde G_N(2)$.\\ 
Under $H_0$, $T_{N,1}\overset{\mathcal{L}}{\underset{N\to\infty}{\rightarrow}}\mathcal{N}(0,6)$ so we reject $H_0$ if $T_{N,1}>z_{\alpha}$ where $z_{\alpha}/\sqrt{6}$ is the $(1-\alpha)$ quantile of a standard Gaussian random variable.While under $H_{{1}}$, following the procedure of Remark \ref{rem_transf_lin} with $A=(1\ 1)$.
$$\left(T_{N,1}-2\sqrt{N}\lambda_1^2\right)/(2[\Gamma(1,1)+\Gamma(1,2)])^{1/2}\overset{\mathcal{L}}{\underset{N\to\infty}{\rightarrow}} \mathcal N(0,1).$$ 
It is then easy to compute the theoretical power function. In Figure \ref{testfig1} we plot this theoretical function called \begin{rm} \begin{bf}true power fct t1 \end{bf}\end{rm} and the empirical power function called  \begin{rm} \begin{bf}estimated power fct t1\end{bf}\end{rm}. To compute the empirical power function we didn't assume the knowledge of the matrix $\Gamma$ nor the one of the function $f$.\\

\tbf{Test 2}: since the Sobol indices are non negative, the testing problem is naturally unilateral. However in view of more general contexts we introduce the test statistic  $T_{N,2}=|\widetilde G_N(1)|+|\widetilde G_N(2)|$. 
We reject $H_0$ if $T_{N,2}>z_{\alpha}$ where $z_{\alpha}/\sqrt{3}$ is the $(1-\alpha)$ quantile of the random variable having
$$\frac{2}{\sqrt{\pi}}e^{-u^2/4}\Phi(u/\sqrt{2})\1_{\R_{+}}(u)$$
as density ($\Phi$ being the distribution function of a standard Gaussian random variable). Under $H_{{1}}$, the  power function of $T_{N,2}$  and the limit variance are estimated using Monte Carlo technics. In Figure \ref{testfig1} we plot this empirical power function called  \begin{rm} \begin{bf}estimated power fct t2\end{bf}\end{rm}.  \\
%In Figure \ref{testfig1} we plot the empirical power function called  \begin{rm} \begin{bf}estimated power fct t2\end{bf}\end{rm}.\\

\tbf{Test 3}: in the same spirit, we introduce the test statistic  $T_{N,3}=|\widetilde G_N(1)+\widetilde G_N(2)|$. 
We reject $H_0$ if $T_{N,3}>z_{\alpha}$ where $z_{\alpha}/\sqrt{6}$ is the $(1-\alpha/2)$ quantile of a standard Gaussian random variable.Under $H_{{1}}$, the  power function of $T_{N,3}$  and the limit variance are estimated using Monte Carlo technics. In Figure \ref{testfig1} we plot this empirical power function called  \begin{rm} \begin{bf}estimated power fct t3\end{bf}\end{rm}.\\

\tbf{Test 4}: we use the $L^2$ norm and consider $T_{N,4}=(G_N(1))^2+(G_N(2))^2$.
Under $H_0$, $T_{N,4}/3\overset{\mathcal{L}}{\underset{N\to\infty}{\rightarrow}} \chi_2(2)$ so we reject $H_0$ if $T_{N,4}>z_{\alpha}$ where $z_{\alpha}/3$ is the $(1-\alpha)$ quantile of a $\chi_2$ random variablewith 2 degrees of freedom. Under $H_{{1}}$, the  power function of $T_{N,4}$  and the limit variance are estimated using Monte Carlo technics. We plot this empirical power function in Figure \ref{testfig1} called    \begin{rm} \begin{bf}estimated power fct t4\end{bf}\end{rm}.\\

\tbf{Test 5}: we use the infinity norm and consider $T_{N,5}=max(|G_N(1)|;|G_N(2)|)$.
We reject $H_0$ if $T_{N,5}>z_{\alpha}$ where $z_{\alpha}/\sqrt{3}$ is the $[1+\sqrt{1-\alpha}]/2$ quantile of a standard Gaussian random variable.Under $H_{{1}}$, the  power function of $T_{N,5}$  and the limit variance are estimated using Monte Carlo technics. In Figure \ref{testfig1} we plot this theoretical function called \begin{rm} \begin{bf}true power fct t5 \end{bf}\end{rm} and the empirical power function called  \begin{rm} \begin{bf}estimated power fct t5\end{bf}\end{rm}. \\

In Figure \ref{testfig1} we thus present the plot of the different power functions for $N=100,\ 500$ and $1000$. %According to these results it seems that the first test is better.
 
\begin{figure}[h]
\begin{center}
\includegraphics[scale=.75]{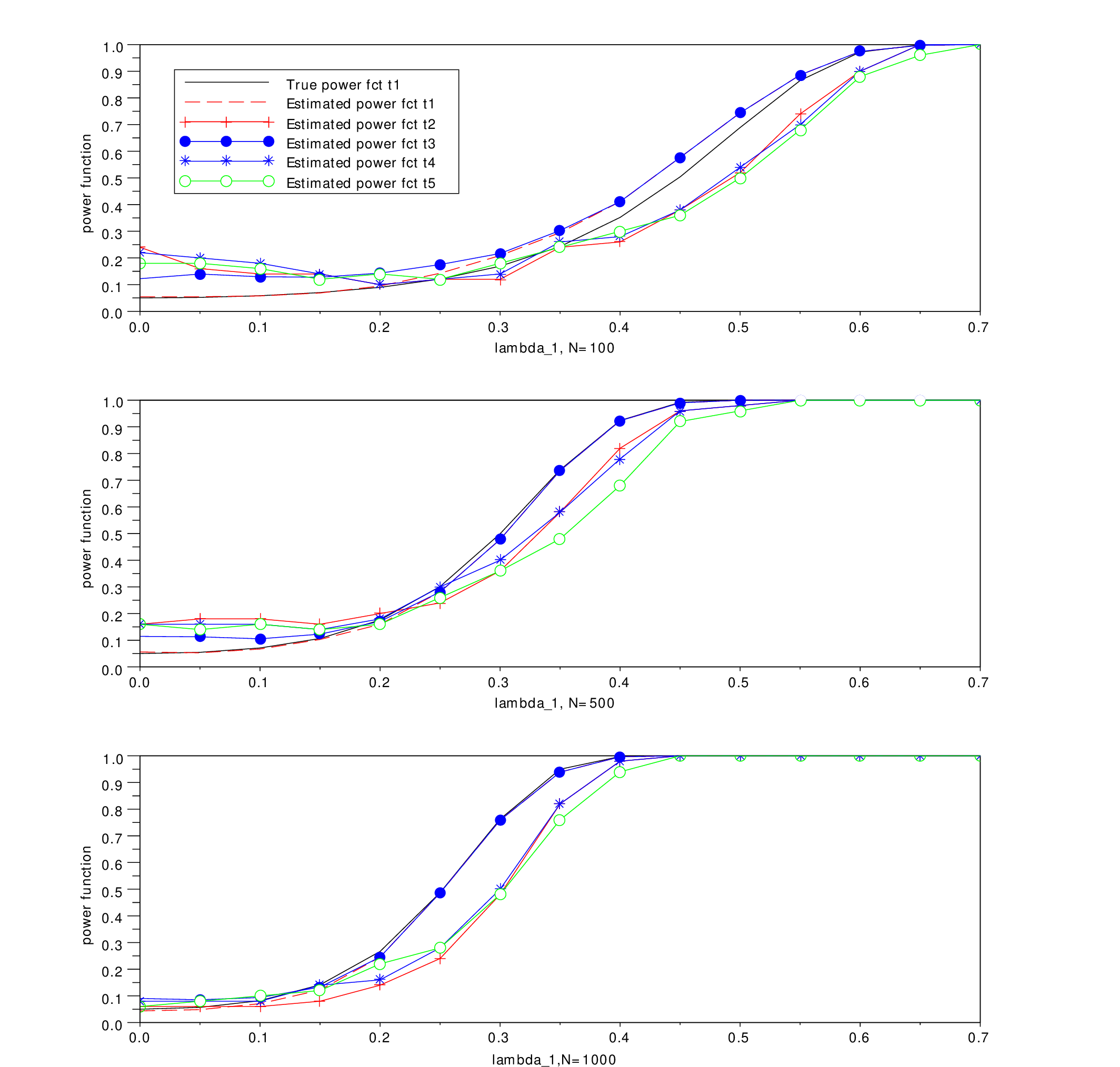}
\caption{Power functions}
\label{testfig1}
\end{center}

\end{figure}

\paragraph{Example 2}

Let $X=(X_1,X_2,X_{3})\sim \mathcal{N}(0,I_3)$, $2\lambda_{1}^{2}+\lambda_{2}^{2}=1$ and 
$$Y=f(X)=\lambda_1 (X_2+ X_{3}) +\lambda_{2}X_{1}X_{2}.$$
Let us test if $X_{1}$ has any influence ie $H_{0}: \myS^{\{\bf 1\}}=0$,  $\myS^{\bf{\{1,2\}}}=S^{\{\bf 2\}}_{\mathrm{Cl}}$ and $S^{\{\bf{1,3}\}}_{\mathrm{Cl}}=S^{\{\bf3\}}_{\mathrm{Cl}}$ . 
Applying Theorem \ref{CLTgen} we easily get 
$$G_{N}:=
\sqrt{N}\left(
\begin{pmatrix}
S_{N,\mathrm{Cl}}^{ 1}\\
S_{N,\mathrm{Cl}}^{1, 2}  -S_{N,\mathrm{Cl}}^{2} \\
S_{N,\mathrm{Cl}}^{1, 3}  -S_{N,\mathrm{Cl}}^{3}
\end{pmatrix}
-
\begin{pmatrix}
S_{\mathrm{Cl}}^{ 1}\\
S_{\mathrm{Cl}}^{1, 2}  -S_{\mathrm{Cl}}^{2} \\
S_{\mathrm{Cl}}^{1, 3}  -S_{\mathrm{Cl}}^{3}
\end{pmatrix}
\right)
\overset{\mathcal{L}}{\underset{N\to\infty}{\rightarrow}}\mathcal{N}_3\left(0,\Gamma \right).
$$
Here under $H_0$ the covariance limit $\Gamma$ in Theorem \ref{CLTgen} is the identity matrix. Under $H_1$ we use its explicit expression given in Theorem \ref{CLTgen} to compute an empirical estimator $\Gamma_{N}$. We compare Test 1, Test 3, Test 4 and Test 5 defined in the previous example. We present in Figure \ref{testfig2} the plot of the different estimated power functions for  $N=100,\ 500$ and $1000$.
  
\begin{figure}[h]\begin{center}
\includegraphics[scale=.75]{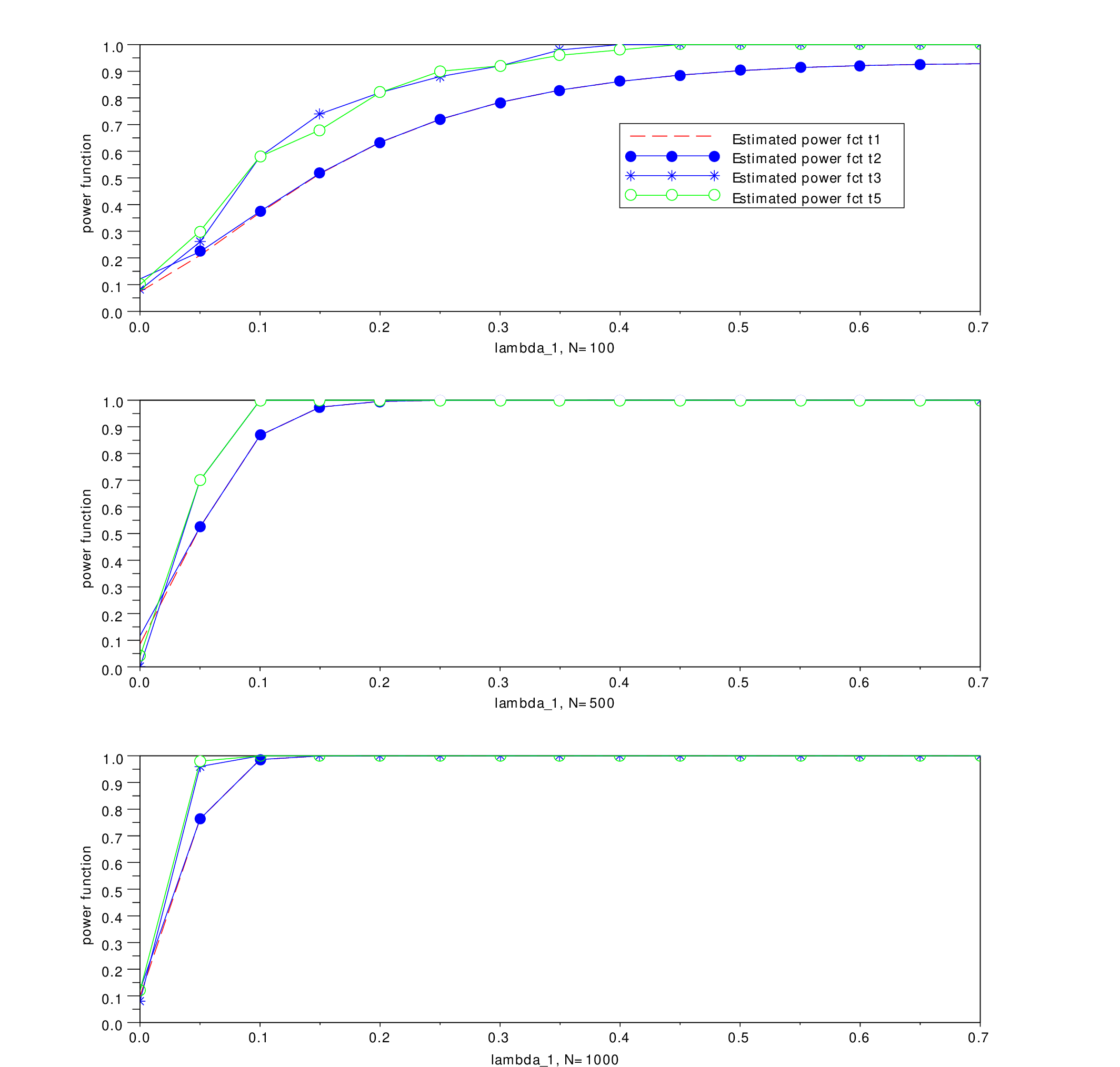}
\end{center}
\caption{Estimated power functions for different values of $N$.}
\label{testfig2}
\end{figure}

Figures \ref{testfig1} and \ref{testfig2} show, as expected, that increasing $N$ leads to a steeper power function (hence, a better discrimination between the hypothesis), and that the estimated power function gets closer to the true one. We also see that no test is the most powerful, uniformly in $\lambda_1$, in accordance with the theory quoted above.

\paragraph{Ishigami function}
\label{ss:ishigamifun}
The Ishigami model 
\cite{ishigami1990importance}
is given by:
\begin{equation}
	\label{e:ishigam}
	Y = f(X_1,X_2,X_3)=\sin X_1+7 \sin^2 X_2+0.1 X_3^4 \sin X_1 
\end{equation}
for $(X_j)_{j=1,2,3}$ are i.i.d. uniform random variables in $[-\pi;\pi]$. %In this case, all the integrability conditions are satisfied (we even have $Y \in L^\infty$).
Exact values of these indices are analytically known: 
\[ \myS^{\{1\}}=0.3139, \;\;\myS^{\{2\}}=0.4424, \;\; \myS^{\{3\}}=0. \]
We  perform simulations in order to show that our test procedure allows us to recover the fact that $\myS^{\{3\}}=0$, even for relatively small values of $N$. In Table \ref{table_Ishi}, we present the simulated confidence levels obtained for $N\in\{10,50,100,500,1000\}$ by the following procedure. For each value of $N$, we use  a 1000 sample to estimate the confidence level and we repeat this scheme 20 times.  We give in Table  \ref{table_Ishi} the minimum, the mean and the maximum of these 20 distinct simulated values of the confidence levels.
\begin{table}
\begin{center}
\begin{tabular}{|c|c|c|c|}\hline
N&Min&Mean&Max\\\hline\hline
10&    0.041  &  0.0463   & 0.048    \\ \hline
50&      0.042 &   0.0482 &   0.050    \\ \hline
100&   0.044 &   0.0489  &  0.051               \\ \hline
 500&0.047 &   0.0510  &  0.053  \\ \hline
 1000& 0.049  &  0.0510&0.055  \\ \hline

\end{tabular}
\caption{Results for the Ishigami function}\label{table_Ishi}
\end{center}
\end{table}

%
%\begin{center}
%\begin{array}{|c|c|c|c|}\hline
%N&Min&Mean&Max\\\hline\hline
%10&    0.041  &  0.0463   & 0.048    \\ \hline
%50&      0.042 &   0.0482 &   0.050    \\ \hline
%100&   0.044 &   0.0489  &  0.051               \\ \hline
% 500&0.047 &   0.0510  &  0.053  \\ \hline
% 1000& 0.049  &  0.0510&0.055  \\ \hline
%\end{array}
%\end{center}
\subsubsection{Numerical applications: a real test case}

It is customary in aeronautics to model the fuel mass needed to link two fixed countries with a commercial aircraft by the Bréguet formula:
\begin{eqnarray}
  M_{fuel} = \left( M_{empty} + M_{pload} \right)\left( e^{\frac{SFC\cdot g\cdot Ra}{V\cdot F}\,10^{-3}} - 1 \right)\,
  . \label{Mfuelmodel}
\end{eqnarray}
See \cite{rachdi2012stochastic} for the description of the model with more details.

 The fixed variables are
\begin{itemize}
  \item $M_{empty}$ : {\it Empty weight} = basic weight of the aircraft (excluding fuel and passengers)
  \item $M_{pload}$ : {\it Payload} = maximal carrying capacity of the aircraft
  \item $g$ : Gravitational constant
  \item $Ra$ : {\it Range} = distance traveled by the aircraft
\end{itemize}

The uncertain variables are
\begin{itemize}
  \item $V$ : {\it Cruise speed} = aircraft speed between ascent and descent phase
  \item $F$ : {\it Lift-to-drag ratio} = aerodynamic coefficient
  \item $SFC$ : {\it Specific Fuel Consumption} = characteristic value of engines
\end{itemize}

We follow \cite{rachdi2012stochastic} and model the uncertainties as presented in
Table \ref{uncertainty}.

\begin{table}[!htbp]
\centering
  \begin{tabular}{|c|c|c|}
  \hline
      variable & density &  parameter  \\
    \hline
    \hline
   $V$ & {\it Uniform} & $(V_{min},V_{max})$ \\
   \hline
   $F$ &   {\it Beta }  & $(7,2,F_{min},F_{max})$ \\
    \hline
    $SFC$ & $\theta_{2}\,e^{-\theta_{2}(u-\theta_{1})}\,\1_{[\theta_{1},+\infty [ }$ & $\theta_1=17.23,\theta_2=3.45$\\ 
    \hline
\end{tabular}
\caption{Uncertainty modeling} \label{uncertainty}
\end{table}

The probability density function of a beta distribution on $[a,b]$
with shape parameters $(\alpha,\beta)$ is
$$ g_{(\alpha,\beta,a,b)}(x) = \frac{(x-a)^{(\alpha-1)}(b-x)^{\beta-1}}{(b-a)^{\beta-1}B(\alpha,\beta)}\1_{[a,b](x)} \, ,$$
where $B(\cdot,\cdot)$ is the beta function. Still following \cite{rachdi2012stochastic}, we take the nominal and extremal values of $V$ and $F$ as in Table \ref{tableminmax}.

\begin{table}[!htbp]
\centering
  \begin{tabular}{|c|c|c|c|}
  \hline
      variable & nominal value &  min & max  \\
    \hline
    \hline
   $V$ & {\bf 231} & 226 & 234 \\
   \hline
   $F$ & {\bf 19} & 18.7 & 19.05 \\
    \hline
\end{tabular}
\caption{Minimal and maximal values of uncertain variables}
\label{tableminmax}
\end{table}

The uncertainty on the cruise speed $V$ represents a relative
difference of
arrival time of $8$ minutes.

The airplane manufacturer may wonder whether he has to improve the quality of the engine ($SFC$) or the aerodynamical property of the plane ($F$). Thus we study the sensitivity of $M_{fuel}$ with respect to $F$ and $SFC$ and we want to know if $H_{0}:\ S^{SFC}>S^{F}$ or $H_{1}:\ S^{SFC}\leq S^{F}$. Applying the test procedure described previously we can not reject $H_{0}$.

\section{Concentration inequalities}
\label{s:concentre}
In this section we give concentration inequalities satisfied by the Sobol indices in one dimension (i.e. $k=1$). 

We define the $h$ function by $h(x)=(1+x)\ln(1+x)-x$ for all $x>-1$.

% Remind $\myS^{{\bf u}}:=\Var (\mathbb{E}(Y|X_i, i \in u))/\Var(Y)$, 
%
%\begin{align*}
%S^{\bf u}_{N,\mathrm{Cl}}&= \frac{ \frac{1}{N} \sum  Y_i   Y_i^{{u}} - \left(\frac{1}{N} \sum Y_i\right) \left(\frac{1}{N}\sum  Y_i^{{u}}\right) }{ \frac{1}{N}\sum  Y_i^2 - \left( \frac{1}{N} \sum  Y_i \right)^2 }
% \end{align*}
%
%and
%
%\begin{align*}
%T^{\bf u}_{N, \mathrm{Cl}}&= \fracu{ \frac{1}{N} \sum  Y_i    Y_i^{{u}} - \left(\frac{1}{N} \sum \overline{Y_i^{\bf u}}\right)^2 }{ \frac{1}{N}\sum  \overline{(Y_i^{\bf u})^2} -  \left(\frac{1}{N} \sum \overline{Y_i^{\bf u}}\right)^2 }.
% \end{align*}

\subsection{Concentration inequalities for $S^{\bf u}_{N,\mathrm{Cl}}$}
\label{ss:concentreS}

We introduce the random variables 
$$U_{i}^ {\pm}=Y_iY_i^{\bf u}-(\myS^{{\bf u}}\pm y)(Y_i)^2\; \trm{and} \; 
J_{i}^{\pm}=(\myS^{{\bf u}}\pm y)Y_i-Y_i^{\bf u}$$

and denote $V_U^+$ (respectively $V_U^-$, $V_J^+$ and $V_J^-$) the second moment of the i.i.d. random variable $U_{i}^ +$  (resp.  $U_{i}^{-}$, $J_i^+$ and $J_i^-$).

\begin{theo}
\label{thm:concentreS}
Let $b>0$ and $y > 0$. Assume that all the random variables $Y_{i}$ and $Y^{\bf u}_{i}$ belong to $[-b,b]$. Then
\begin{align}
\P\left(S^{\bf u}_{N,\mathrm{Cl}}\geq \myS^{{\bf u}}+y\right)&\leq  M_1+2M_2+2M_3,\label{concentrationR_nc}\\
\P\left(S^{\bf u}_{N,\mathrm{Cl}}\leq \myS^{{\bf u}}-y\right)&\leq M_4+2M_2+2M_5,\label{concentrationL_nc}
\end{align}
where
\[
 \begin{array}{l}
 \displaystyle
M_1=\exp\left\{-\frac{N V_U^+}{b_{U}^{2}}h\left(\frac{b_{U}}{V_U^+}\frac{yV}{2}\right)\right\}\\
 \displaystyle
M_2=\exp\left\{-\frac{N V}{b^{2}}h\left(\frac{b}{V}\sqrt{\frac{yV}{2}}\right)\right\}
 \end{array}
 \;\;
 \begin{array}{l}
 \displaystyle
M_3=\exp\left\{-\frac{N V_J^+b^2}{b_{U}^{2}}h\left(\frac{b_{U}}{bV_J^+}\sqrt{\frac{yV}{2}}\right)\right\} \\
 \displaystyle
M_4=\exp\left\{-\frac{N V_U^-}{b_{U}^{2}}h\left(\frac{b_{U}}{V_U^-}\frac{yV}{2}\right)\right\}
\end{array}
\]
\[
M_5=\exp\left\{-\frac{N V_J^-b^2}{b_{U}^{2}}h\left(\frac{b_{U}}{bV_J^-}\sqrt{\frac{yV}{2}}\right)\right\}
\]
and $b_{U}=b^{2}(1+\myS^{{\bf u}}+y)$.
\end{theo}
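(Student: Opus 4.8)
The strategy is to turn each of the two tail events into a union of three deviation events for the empirical means of bounded i.i.d.\ sequences, and then to apply Bennett's inequality to each. As in the proof of Theorem \ref{CLTgen}, since $S^{\bf u}_{N,\mathrm{Cl}}$ is invariant under a common translation of the $Y_i$'s and $Y_i^{\bf u}$'s, I may assume $\E(Y)=0$, so that $\E(YY^{\bf u})=\Cov(Y,Y^{\bf u})=\myS^{\bf u}V$ (Lemma \ref{lemma:cov}) and $V=\Var(Y)=\E(Y^2)\le b^2$; this last bound will be used repeatedly. The denominator of $S^{\bf u}_{N,\mathrm{Cl}}$ is the empirical variance $\frac1N\sum Y_i^2-\overline{Y}_N^2\ge 0$, so on $\{S^{\bf u}_{N,\mathrm{Cl}}\ge \myS^{\bf u}+y\}$ the numerator dominates $(\myS^{\bf u}+y)$ times the denominator (the event $\{$denominator$=0\}$ being negligible), and rearranging one checks that
\[
\overline{U^+}_N+\overline{Y}_N\,\overline{J^+}_N\ \ge\ 0 ,\qquad\text{while}\qquad \E(U_i^+)=-yV .
\]
Hence $\bigl(\overline{U^+}_N-\E(U_i^+)\bigr)+\overline{Y}_N\,\overline{J^+}_N\ge yV$, so at least one of the two summands is $\ge yV/2$; and if the second one is $\ge yV/2$ then, being positive, $|\overline{Y}_N|\,|\overline{J^+}_N|\ge yV/2$, forcing $|\overline{Y}_N|\ge\sqrt{yV/2}$ or $|\overline{J^+}_N|\ge\sqrt{yV/2}$.

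Consequently
\[
\P\bigl(S^{\bf u}_{N,\mathrm{Cl}}\ge \myS^{\bf u}+y\bigr)\le \P\Bigl(\overline{U^+}_N-\E(U_i^+)\ge\tfrac{yV}{2}\Bigr)+\P\Bigl(|\overline{Y}_N|\ge\sqrt{\tfrac{yV}{2}}\Bigr)+\P\Bigl(|\overline{J^+}_N|\ge\sqrt{\tfrac{yV}{2}}\Bigr),
\]
and the lower tail $\{S^{\bf u}_{N,\mathrm{Cl}}\le \myS^{\bf u}-y\}$ is treated identically: it gives $\overline{U^-}_N+\overline{Y}_N\,\overline{J^-}_N\le 0$ with $\E(U_i^-)=+yV$, hence a lower deviation of $\overline{U^-}_N$ by $yV/2$, or $|\overline{Y}_N|\ge\sqrt{yV/2}$, or $|\overline{J^-}_N|\ge\sqrt{yV/2}$. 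It then remains to bound each of these six pieces by Bennett's inequality, $\P\bigl(\tfrac1N\sum(X_i-\E X_i)\ge\tau\bigr)\le\exp\{-\tfrac{N\sigma^2}{M^2}h(\tfrac{M\tau}{\sigma^2})\}$ for i.i.d.\ $X_i$ with $X_i-\E X_i\le M$ a.s.\ and $\Var(X_i)\le\sigma^2$ (the right-hand side being increasing in $\sigma^2$, so that using the second moments $V_U^\pm,V_J^\pm,V$ in place of the variances is legitimate). Applying it with $\tau=yV/2$ to $U_i^+$ and to $-U_i^-$ produces $M_1$ and $M_4$; with $\tau=\sqrt{yV/2}$ to $\pm Y_i$ it produces $2M_2$; with $\tau=\sqrt{yV/2}$ to $\pm J_i^+$ and $\pm J_i^-$ it produces $2M_3$ and $2M_5$. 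Summing gives (\ref{concentrationR_nc}) and (\ref{concentrationL_nc}).

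The one genuinely delicate point is obtaining the a.s.\ bounds with exactly the constants $b$ and $b_U=b^2(1+\myS^{\bf u}+y)$. That $|Y_i|\le b$ and $|J_i^\pm|\le b_U/b$ is immediate from $|Y_i|,|Y_i^{\bf u}|\le b$ and $|\myS^{\bf u}-y|\le\myS^{\bf u}+y$. The subtler claim is that the \emph{centered} variables $U_i^+-\E(U_i^+)=U_i^++yV$ and $-U_i^--\E(-U_i^-)=-U_i^-+yV$ lie in $[-b_U,b_U]$: since $-(\myS^{\bf u}\pm y)$ may have either sign, this requires splitting on the sign of $\myS^{\bf u}-y$ and, in each case, invoking $V\le b^2$ to absorb the $yV$ shift into $b_U$ (e.g.\ $U_i^+\in[-b_U,b^2]$ and $b^2+yV\le b_U\iff V\le b^2$). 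Everything else is bookkeeping.
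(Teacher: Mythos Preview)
Your proof is correct and follows essentially the same route as the paper's: reduce to the centered case by translation invariance, rewrite the tail event as $\frac1N\sum(U_i^{\pm}-\E U_i^{\pm})+\overline{Y}_N\,\overline{J^{\pm}}_N\gtrless yV$, split via $\{a+b\ge c\}\subset\{a\ge c/2\}\cup\{b\ge c/2\}$ and $\{ab\ge c\}\subset\{|a|\ge\sqrt c\}\cup\{|b|\ge\sqrt c\}$, and apply Bennett's inequality five times. Your extra care in verifying the a.s.\ bound $b_U$ on the centered $U_i^{\pm}$ (via $V\le b^2$) and in justifying the use of second moments rather than variances through the monotonicity of the Bennett bound in $\sigma^2$ fills in details the paper leaves implicit, but the argument is otherwise identical.
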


\begin{rem}
One must be cautious since the variables $Y_i-\overline{Y}_N$ are dependent.
\end{rem}

\begin{proof}  Since $\myS^{{\bf u}}$ and $S^{\bf u}_{N,\mathrm{Cl}}$ are invariant by translation on $Y$ and $Y^{\bf u}$, one may assume without loss of generality that $Y$ is centered.
\begin{enumerate}
\item Obviously $U_i^+$ and $U_i^-$ are upper-bounded by $b_U$, $J_i^+$ and $J_i^-$ by $b_U/b$,
\begin{displaymath}
\begin{array}{lll}
\E(U_i^+)=-yV && \E(J_i^+)=0\\
\E(U_i^-)=yV && \E(J_i^-)=0\\
\end{array}
\end{displaymath}
and
\begin{displaymath}
\begin{array}{l}
V_U^{\pm} = \Var(YY^{\bf u})+(\myS^{{\bf u}}+y)^2\Var(Y^2)-2(\myS^{{\bf u}}\pm y)\Cov(YY^{\bf u},Y^2)+y^2V^2\\
\\
V_J^{\pm} = ((\myS^{{\bf u}}\pm y)^2+1)V-2(\myS^{{\bf u}}\pm y)C_u.\\
\end{array}
\end{displaymath}

\item Proof of (\ref{concentrationR_nc}). Using 
$$\{a+b\geq c\}\subset \{a\geq c/2\} \cup \{b\geq c/2\} \quad \trm{and} \quad \{ab\geq c\}\subset \{|a|\geq \sqrt c\} \cup \{|b|\geq \sqrt c\}$$

one gets
\begin{eqnarray*}
\P\left(S^{\bf u}_{N,\mathrm{Cl}}\geq \myS^{{\bf u}}+y\right)&=&\P\left(\frac{\frac{1}{N}\sum_{i=1}^N Y_iY_i^{\bf u}-\overline{Y}_N\overline{Y}^{\bf u}_N}{\frac{1}{N}\sum_{i=1}^N  (Y_i)^2-\left(\overline{Y}_N\right)^2}\geq \myS^{{\bf u}}+y\right) \\
&=&\P\left(\frac{1}{N}\sum_{i=1}^N \left(U_i^+-\E(U^+)\right)+\overline{Y}_N \overline{J}^+_N\geq yV\right) \\
&\leq & \P\left(\sum_{i=1}^N \left(U_i^+-\E(U^+)\right)]\geq N\frac{yV}{2}\right) +\P\left(\sum_{i=1}^N Y_i\geq N\sqrt{\frac{yV}{2}}\right)\\
&&+ \P\left(\sum_{i=1}^N (-Y_i)\geq N\sqrt{\frac{yV}{2}}\right)
+\P\left(\sum_{i=1}^N J_i^+\geq N\sqrt{\frac{yV}{2}}\right) \\
&&+ \P\left(\sum_{i=1}^N (-J_i^+)\geq N\sqrt{\frac{yV}{2}}\right).\\
\end{eqnarray*}

Inequality (\ref{concentrationR_nc}) comes directly by applying five times Bennett inequality (see \cite{boucheron2013concentration} and references therein).\\

\item  Proof of (\ref{concentrationL_nc}). In the same way,
one gets
\begin{eqnarray*}
\P\left(S^{\bf u}_{N,\mathrm{Cl}}\leq \myS^{{\bf u}}-y\right)
&=&\P\left(\frac{1}{N}\sum_{i=1}^N \left(-U_i^-+\E(U^-)\right)+(-\overline{Y}_N) \overline{J}^-_N\geq yV\right) \\
&\leq & \P\left(\sum_{i=1}^N \left(-U_i^-+\E(U^-)\right)\geq N\frac{yV}{2}\right) +\P\left(\sum_{i=1}^N Y_i\geq N\sqrt{\frac{yV}{2}}\right)\\
&&+ \P\left(\sum_{i=1}^N (-Y_i)\geq N\sqrt{\frac{yV}{2}}\right)
+\P\left(\sum_{i=1}^N J_i^-\geq N\sqrt{\frac{yV}{2}}\right)\\
&&+ \P\left(\sum_{i=1}^N (-J_i^-)\geq N\sqrt{\frac{yV}{2}}\right).\\
\end{eqnarray*}

Inequality (\ref{concentrationL_nc}) comes directly by applying five times Bennett inequality. \qedhere
\end{enumerate} 
\end{proof}

\subsection{Concentration inequalities for $T^{\bf u}_{N,\mathrm{Cl}}$}
\label{ss:concentreT}

Now remind $Z_{i}=\frac{Y_i+Y_i^{\bf u}}{2}$ and introduce the random variables $K_{i}^ {\pm}=Y_iY_i^{\bf u}-(\myS^{{\bf u}}\pm y)\frac{(Y_i)^2+(Y_i^{\bf u})^2}{2}.$

Denote $V_{K}^+$ (resp. $V_{K}^-$) the second moment of the i.i.d. random variable $K_{i}^ +$  (resp.  $K_{i}^{-}$).

\begin{theo}
\label{thm:concentreT}
Let $b>0$ and $y>0$. Assume that $Y \in [-b,b]$. Then
\begin{align}
\P\left(T^{\bf u}_{N,\mathrm{Cl}}\geq \myS^{{\bf u}}+y\right)&\leq  m_1+2m_2\1_{\{\myS^{{\bf u}}+y-1\geq 0\}},\label{concentrationR}\\
\P\left(T^{\bf u}_{N,\mathrm{Cl}}\leq \myS^{{\bf u}}-y\right)&\leq m_3+2m_4 \1_{\{\myS^{{\bf u}}+y-1\geq 0\}},\label{concentrationL}
\end{align}
where
\[
m_1=\exp\left\{-\frac{N V_{K}^+}{b_{U}^{2}}h\left(\frac{b_{U}}{V_{K}^+}\frac{yV}{2}\right)\right\},\;\;\;
m_2=\exp\left\{-\frac{N (V+C)}{2b^{2}}h\left(\frac{b}{V+C}\sqrt{\frac{2yV}{\myS^{{\bf u}}+y-1}}\right)\right\},
\]
\[
m_3=\exp\left\{-\frac{N V_{K}^-}{b_{U}^{2}}h\left(\frac{b_{U}}{V_{K}^-}\frac{yV}{2}\right)\right\},\;\;\;
m_4=\exp\left\{-\frac{N (V+C)}{2b^{2}}h\left(\frac{b}{V+C}\sqrt{\frac{2yV}{y+1-\myS^{{\bf u}}}}\right)\right\}.
\]
\end{theo}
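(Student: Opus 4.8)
\textit{Plan.} I would follow the template of the proof of Theorem \ref{thm:concentreS}. The starting observation is that the denominator of $T^{\bf u}_{N,\mathrm{Cl}}$, namely $D_N:=\frac1{2N}\sum_{i=1}^N\big((Y_i)^2+(Y_i^{\bf u})^2\big)-(\overline{Z}_N)^2$ with $\overline{Z}_N:=\frac1N\sum_{i=1}^N Z_i$, is the empirical variance of the pooled sample $(Y_1,\dots,Y_N,Y_1^{\bf u},\dots,Y_N^{\bf u})$ and is therefore nonnegative. As in Theorem \ref{thm:concentreS}, $T^{\bf u}_{N,\mathrm{Cl}}$ and $\myS^{\bf u}$ are invariant under a common translation of the $Y_i$'s and $Y_i^{\bf u}$'s, so I would first assume $\E(Y)=0$; then $V=\E(Y^2)$, $C=\E(YY^{\bf u})$, $\myS^{\bf u}=C/V$, and $\E((Y^{\bf u})^2)=V$ since $Y^{\bf u}$ has the same law as $Y$.

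Next I would clear the ratio: multiplying through by $D_N\geq0$ and regrouping, the upper-tail event becomes
\[
\big\{T^{\bf u}_{N,\mathrm{Cl}}\geq \myS^{\bf u}+y\big\}=\Big\{\tfrac1N\sum_{i=1}^N\big(K_i^+-\E(K^+)\big)+(\myS^{\bf u}+y-1)\,(\overline{Z}_N)^2\geq yV\Big\},
\]
and the lower-tail event reduces to $\{\tfrac1N\sum_{i=1}^N\big(\E(K^-)-K_i^-\big)+(1-\myS^{\bf u}+y)\,(\overline{Z}_N)^2\geq yV\}$. Using that $Y^{\bf u}$ has the same law as $Y$, the bound $|Y_iY_i^{\bf u}|\leq\tfrac12\big((Y_i)^2+(Y_i^{\bf u})^2\big)\leq b^2$ and $V\leq b^2$, one checks that $\E(K^+)=-yV$, $\E(K^-)=yV$, that $K_i^+-\E(K^+)$ and $\E(K^-)-K_i^-$ are both bounded above by $b_U=b^2(1+\myS^{\bf u}+y)$, and that $Z_i$ is centred with $|Z_i|\leq b$ and $\E(Z_i^2)=\tfrac{V+C}{2}$.

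Then I would split according to the sign of the coefficient $\myS^{\bf u}+y-1$ of $(\overline{Z}_N)^2$. When it is $\leq0$ the quadratic term only helps and is discarded, so the upper-tail event is contained in $\{\tfrac1N\sum(K_i^+-\E(K^+))\geq\tfrac{yV}{2}\}$, and Bennett's inequality (see \cite{boucheron2013concentration}) with second-moment proxy $V_K^+$ and upper bound $b_U$ gives $m_1$ --- this is the origin of the indicator $\1_{\{\myS^{\bf u}+y-1\geq0\}}$ (one may also note $T^{\bf u}_{N,\mathrm{Cl}}\leq1$, which makes the event vacuous whenever $\myS^{\bf u}+y>1$). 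When the coefficient is $>0$, I would use $\{a+b\geq c\}\subset\{a\geq c/2\}\cup\{b\geq c/2\}$ followed by $\{x^2\geq d\}=\{x\geq\sqrt d\}\cup\{-x\geq\sqrt d\}$, and apply Bennett to $\pm Z_i$ with variance proxy $\tfrac{V+C}{2}$, bound $b$, and threshold $\sqrt{yV/(2(\myS^{\bf u}+y-1))}$, which after simplifying the argument of $h$ gives $2m_2$. The lower tail goes the same way with $K_i^-$ in place of $K_i^+$ and threshold $\sqrt{yV/(2(y+1-\myS^{\bf u}))}$, producing $m_3$ and $2m_4$; collecting the pieces yields (\ref{concentrationR}) and (\ref{concentrationL}). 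The main obstacle is the clearing-the-ratio step: it must be carried out so that the recentred $K_i^\pm$ and the term $(\overline{Z}_N)^2$ come out with exactly the above coefficients, and one must then verify that the crude deterministic bounds $b_U$ for $|K_i^\pm-\E(K^\pm)|$ and $\tfrac{V+C}{2}$ for $\E(Z_i^2)$ are precisely those yielding the constants in $m_1,\dots,m_4$; once this is in place the rest is at most three applications of Bennett's inequality per tail.
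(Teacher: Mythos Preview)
Your proposal is correct and follows essentially the same route as the paper's proof: center $Y$ by translation invariance, clear the ratio to rewrite the tail events in terms of $\tfrac1N\sum(K_i^\pm-\E(K^\pm))$ plus $(\myS^{\bf u}\pm y-1)(\overline{Z}_N)^2$, then split via $\{a+b\geq c\}\subset\{a\geq c/2\}\cup\{b\geq c/2\}$ and apply Bennett's inequality to $K_i^\pm$ and to $\pm Z_i$. You are in fact a bit more explicit than the paper about the case $\myS^{\bf u}+y-1\leq0$ (where the quadratic term is nonpositive and can be dropped, leaving only $m_1$); note that in that case the event is actually contained in $\{\tfrac1N\sum(K_i^+-\E(K^+))\geq yV\}$ rather than $\geq yV/2$, but your weaker inclusion still yields $m_1$, and your parenthetical remark about $T^{\bf u}_{N,\mathrm{Cl}}\leq1$ is a side observation rather than part of the argument.
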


\iffalse %%%

\alert{On peut aussi rédiger comme ça :
\begin{theo}
Let $b>0$ and $y > 1$. Assume that all the random variables $Y_{i}$ and $Y^{\bf u}_{i}$ belong to $[-b,b]$. Then
\begin{align}
\P\left(T^{\bf u}_{N,\mathrm{Cl}}\geq \myS^{{\bf u}}+y\right)&\leq  m_1+2m_2,\label{concentrationR}\\
\P\left(T^{\bf u}_{N,\mathrm{Cl}}\leq \myS^{{\bf u}}-y\right)&\leq m_3+2m_4.\label{concentrationL}
\end{align}
where
\begin{displaymath}
 \begin{array}{l}
m_1=\exp\left\{-\frac{N V_{K}^+}{b_{U}^{2}}h\left(\frac{b_{U}}{V_{K}^+}\frac{yV}{2}\right)\right\}\\
m_2=\exp\left\{-\frac{N (V+C)}{2b^{2}}h\left(\frac{b}{V+C}\sqrt{\frac{2yV}{\myS^{{\bf u}}+y-1}}\right)\right\}\\
\end{array} 
\trm{and} \quad
 \begin{array}{l}
m_3=\exp\left\{-\frac{N V_{K}^-}{b_{U}^{2}}h\left(\frac{b_{U}}{V_{K}^-}\frac{yV}{2}\right)\right\}\\
m_4=\exp\left\{-\frac{N (V+C)}{2b^{2}}h\left(\frac{b}{V+C}\sqrt{\frac{2yV}{y+1-\myS^{{\bf u}}}}\right)\right\}.
\end{array}
\end{displaymath}
\end{theo}}
\fi %%%

\begin{proof}  Since  $T^{\bf u}_{N,\mathrm{Cl}}$ is invariant by translation on $Y$ and $Y^{\bf u}$, one may assume without loss of generality that $Y$ is centered.
\begin{enumerate}
\item Obvisouly $K_i^+$ and $K_i^-$ are upper-bounded by $b_U$, $\E(K_i^+)=-yV$, $\E(K_i^-)=yV$ and
$$V_{K}^{\pm} = V_{U}^{\pm}+(\myS^{{\bf u}}\pm y)^2\frac{\Cov(Y^2,(Y^{\bf u})^2)-\Var(Y^2)}{2}.$$
We also have $Z_i$ is upper-bounded by $b$, $\E(Z_i)=0$ and $\E(Z_i^2)=\frac{V+C_u}{2}$.
\item Proof of (\ref{concentrationR}). One gets if $\myS^{{\bf u}}+y-1\geq 0$ %\alert{since $\myS^{{\bf u}}+y-1\geq 0$}
\begin{eqnarray*}
\P\left(T^{\bf u}_{N,\mathrm{Cl}}\geq \myS^{{\bf u}}+y\right)&=&\P\left(\frac{\frac1N\sum_{i=1}^N Y_iY_i^{\bf u}-\left(\overline{Z}_N\right)^2  }{\frac1N\sum_{i=1}^N  \frac{Y_i^2+(Y_i^{\bf u})^2}{2}-\left(\overline{Z}_N\right)^2}\geq \myS^{{\bf u}}+y\right) \\
&=&\P\left(\frac{1}{N}\sum_{i=1}^N \left(K_i^+-\E(K^+)\right)+\left(\overline{Z}_N\right)^2(\myS^{{\bf u}}+y-1)\geq yV\right) \\
&\leq & \P\left(\sum_{i=1}^N \left(K_i^+-\E(K^+)\right)\geq N\frac{yV}{2}\right) +\P\left(\sum_{i=1}^N Z_i\geq N\sqrt{\frac{yV}{2(\myS^{{\bf u}}+y-1)}}\right)\\
&&+\P\left(\sum_{i=1}^N (-Z_i) \geq N\sqrt{\frac{yV}{2(\myS^{{\bf u}}+y-1)}}\right).
\end{eqnarray*}

Inequality (\ref{concentrationR}) comes directly by applying Bennett inequality to the random variables $K_i^+$, $Z_i$ and $-Z_i$.\\

\item  Proof of (\ref{concentrationL}). One gets since $y+1-\myS^{{\bf u}}>0$
\beq
\P\left(T^{\bf u}_{N,\mathrm{Cl}}\leq \myS^{{\bf u}}-y\right)&=&\P\left(\frac{1}{N}\sum_{i=1}^N \left(-K_i^-+\E(K^-)\right)+\left(\overline{Z}_N\right)^2(y+1-\myS^{{\bf u}})\geq yV\right) \\
&\leq & \P\left(\sum_{i=1}^N \left(-K_i^-+\E(K^-)\right)\geq N\frac{yV}{2}\right) +\P\left(\sum_{i=1}^N Z_i\geq N\sqrt{\frac{yV}{2(y+1-\myS^{{\bf u}})}}\right)\\
&&+\P\left(\sum_{i=1}^N (-Z_i)\geq N\sqrt{\frac{yV}{2(y+1-\myS^{{\bf u}})}}\right).
\eeq

Inequality (\ref{concentrationL}) comes from Bennett inequality to the random variables $K_i^-$, $Z_i$ and $-Z_i$.\qedhere
\end{enumerate} 
\end{proof}

\subsection{Numerical applications}
\label{ss:numconcentre}
In this section, we provide numerical illustrations of the concentration inequalities stated in Sections \ref{ss:concentreS} and \ref{ss:concentreT}.

The upper bounds appearing in Theorem \ref{thm:concentreS} involve the (a priori) unknown quantities:
\[ Q = \left( V, V_U^+, V_U^-, V_J^+, V_J^-, \myS^{\bf u} \right). \]
We denote by $pAbove(y,N)$ and $pBelow(y,N)$ the estimators of the right-hand sides of \eqref{concentrationR_nc} and \eqref{concentrationL_nc}, respectively, obtained by replacing the $Q$ vector by its empirical estimate.

Similarly, we denote by $pAbove'(y,N)$ and $pBelow'(y,N)$ the estimators of the right-hand sides of \eqref{concentrationR} and \eqref{concentrationL} when:
\[ Q' = \left( V, C, V_K^+, V_K^-, \myS^{\bf u}    \right) \]
is replaced by its empirical estimate.

One should note at this point that, on the one hand, the bounds of Theorems \ref{thm:concentreS} and \ref{thm:concentreT} are fully rigorous for any $N$. From a practical point of view, these bounds are not computable, unless the $Q$ (resp. $Q'$) vector is known. On the other hand, $pAbove$ and $pBelow$ (resp. $pAbove'$ and $pBelow'$) are computable but are not fully justified for finite $N$, as they rely on the estimation of $Q$ (resp. $Q'$). However, as pointed out in \cite{hickernell2012guaranteed}, these bounds are \emph{conservative}, hence they are less sensitive to a bad estimation than the asymptotic confidence interval given by the CLT

We again take for $f$ the Ishigami function considered in \ref{ss:ishigamifun}.

In this case, it is easy to check that $Y \in [-b, b]$, where:
\[ b=8 + 0.1 \times \pi^4. \]
When such a majoration of $Y$ is not possible, $b$ can be put into the $Q$ (or $Q'$) vector and estimator of it can be plugged in to obtain $pAbove$ and $pBelow$ (or $pAbove'$ and $pBelow'$).

We also choose ${\bf u}=\{1\}$.

Figure \ref{fig:concentre} show, for different values of $N$, the plot of $pAbove(y,N)$ and $pBelow(y,N)$ (respectively, $pAbove'(y,N)$ and $pBelow'(y,N)$) as functions of $y$.

\begin{figure}[h]
	\begin{center}
		\includegraphics[scale=.6]{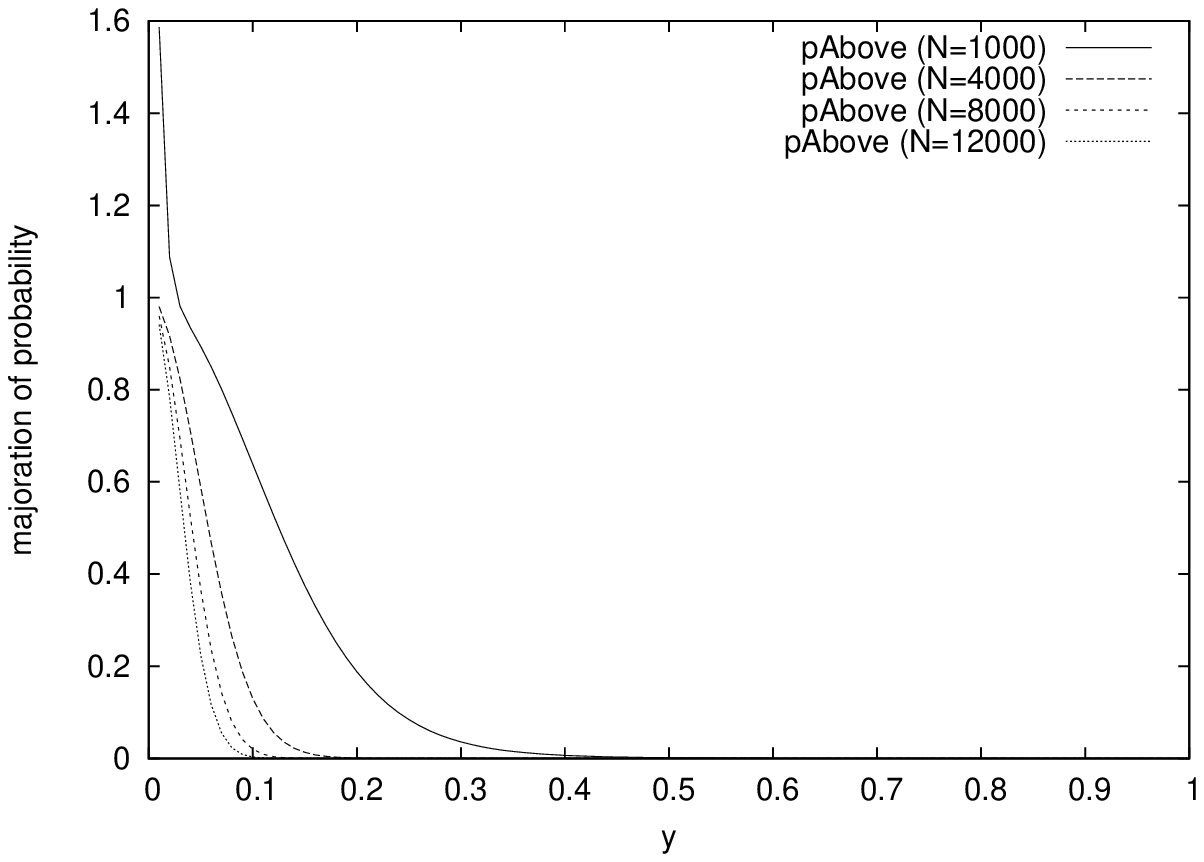}
		\includegraphics[scale=.6]{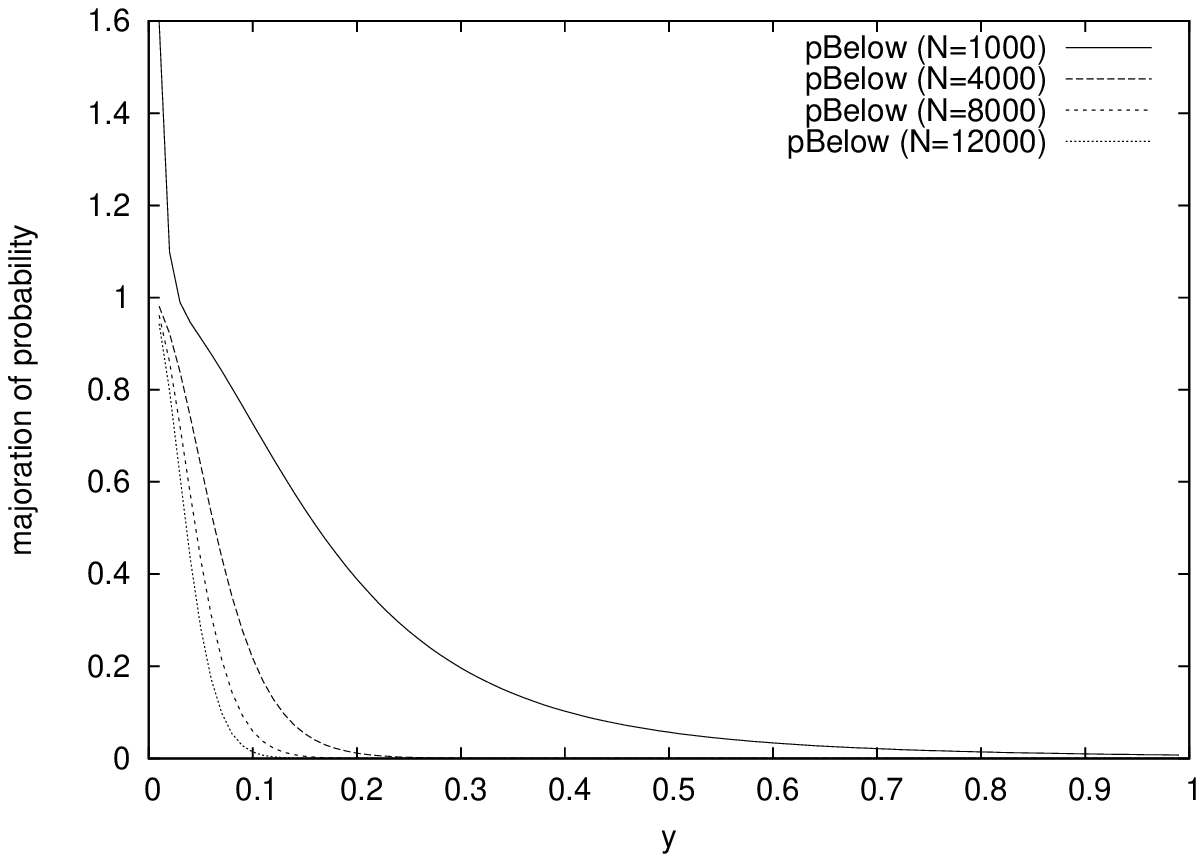} \\
		\includegraphics[scale=.6]{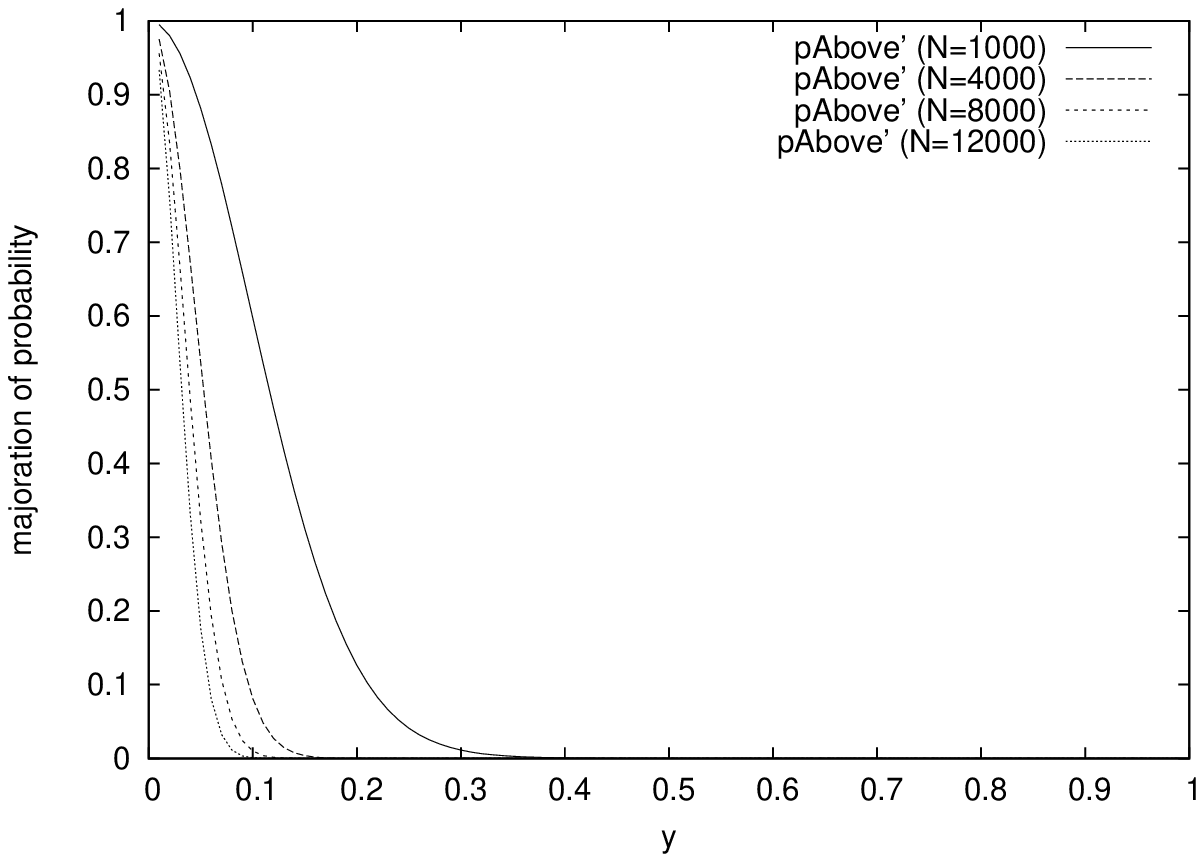}
		\includegraphics[scale=.6]{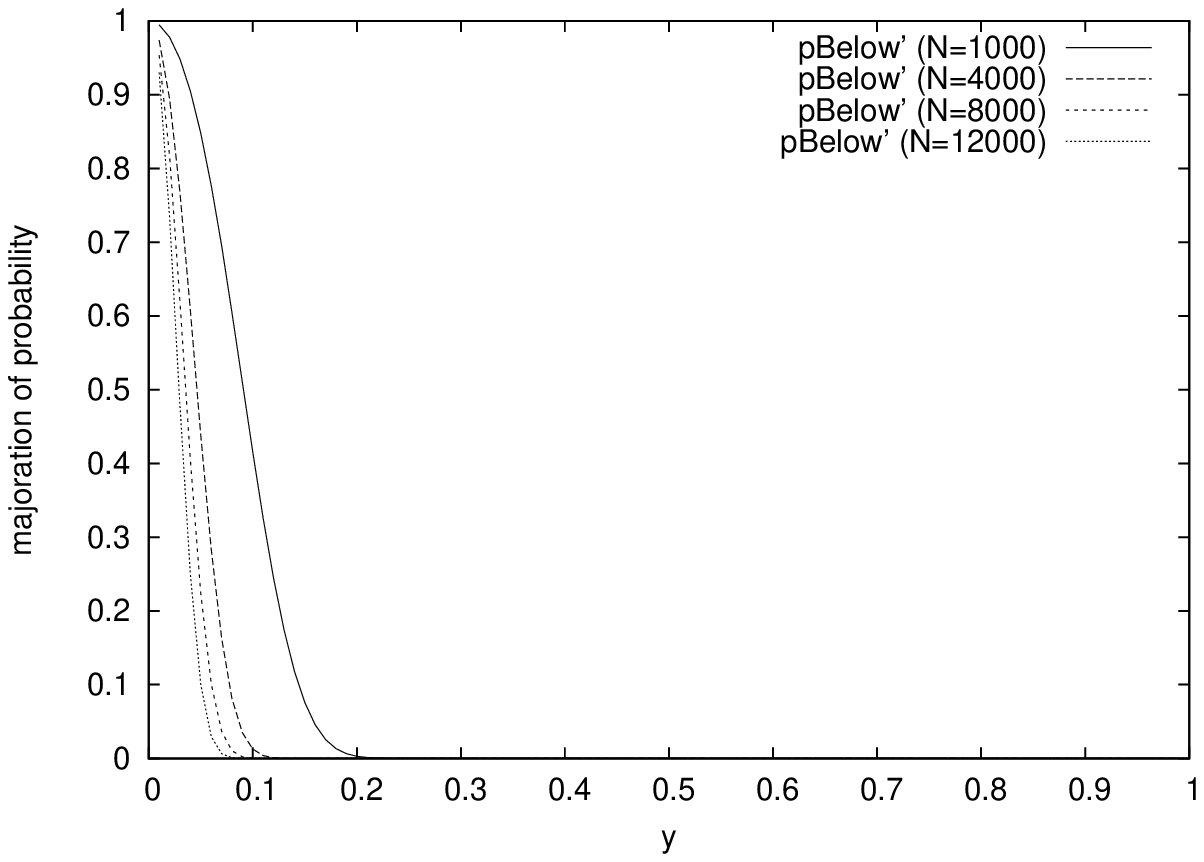}
		\caption{Plots, for $N\in\{1000, 4000, 8000, 12000\}$, of $pAbove(y,N)$ (left-top) and $pBelow(y,N)$ (right-top), $pAbove'(y,N)$ (left-bottom) and $pBelow'(y,N)$ (right-bottom) for the Ishigami model and for ${\bf u}=\{1\}$. }
		\label{fig:concentre}
	\end{center}
\end{figure}

As expected, the concentration inequalities are more conservative than the asymptotic confidence interval. These plots confirm that the $T^{\bf u}_{N,\mathrm{Cl}}$ concentrates faster than $ S^{\bf u}_{N,\mathrm{Cl}}$, and the inequality, while conservative, is sharp enough for this desirable property of $T^{\bf u}_{N,\mathrm{Cl}}$ to be reported. We also notice that there is a dissimetry in the bounds for above and below deviations, as this is often the case for concentration inequalities. Finally, the expected convergence for $N\rightarrow+\infty$ is observed. 

\section{Berry-Esseen Theorems}
\label{s:berry}

In this section we will give a general Berry-Esseen type Theorem for the estimator $S^{\bf u}_{N,\mathrm{Cl}}$ in one dimension (i.e. $k=1$). 
Let $\Phi$  be the cumulative distribution function of the standard Gaussian distribution.

\subsection{Pinelis' Theorem}\label{secpinelis}

We first recall a general Berry-Esseen type theorem proved in \cite{pinelis2009berry}.
Let $(V_i)_{i \geq 1}$ a sequence of i.i.d. centered random variables in $\mathbb{R}^d$, for some $d \in \mathbb{N}^*$. 
% We define $\overline{V}:=\frac1n \sum_{i=1}^n V_i$.
Let $f$ some measurable function: $\mathbb{R}^d \rightarrow \mathbb{R}$ with $f(0)=0$ and such that:
\begin{equation}\label{pinelis}
\exists \, \varepsilon >0, \, \exists \, M_{\varepsilon}>0 \text{ s. t. }
|f(x)-L(x)|\leq \frac{M_{\varepsilon}}{2}||x||^2
\end{equation}
 where $L:=Df(0)$ is the Fr\'echet derivative of $f$ at point $0$.
 
 \begin{rem}\label{rempinelis}
 Remark that condition (\ref{pinelis}) is satisfied as soon as $f$ is twice continuously differentiable in a neighborhood of $0$.
 \end{rem}
 
\begin{theorem}[Corollary 3.7 in \cite{pinelis2009berry}]\label{thpin}
Take any $p \in \, (2,3]$. Assume (\ref{pinelis}) holds, 
$${\sigma}:=\sqrt{\mathbb{E}\left(L(V)^2\right)} >0\, ,$$
and $\left(\mathbb{E}\left(||V||^p\right)\right)^{1/p} < \infty$ where $\| \cdot \|$ denotes the euclidean norm on $\mathbb{R}^d$. Then for all $z \in \mathbb{R}$
\begin{equation}\label{pinelisbis}
\left| \mathbb{P}\left(\frac{f(\overline{V}_n)}{{\sigma}/\sqrt{n}} \leq z \right)-\Phi(z)\right| \leq \frac{\kappa}{n^{p/2-1}} \, ,
\end{equation}
where $\kappa$ above is a generic constant that depends only upon p.
\end{theorem}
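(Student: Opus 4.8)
The plan is a quantitative delta method: split $f$ at the origin into its linear part and a quadratic remainder, handle the linear part with the Berry--Esseen theorem in the form valid for absolute moments of order $p\in(2,3]$ (together with its non-uniform refinement), and then control the effect of the remainder on the distribution function by an anti-concentration estimate for the linearized statistic.

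Concretely, writing $S_n=\sum_{i=1}^n V_i$ and $T_n=S_n/\sqrt n$ (so $\overline{V}_n=T_n/\sqrt n$), setting $L=Df(0)$ --- a linear form, hence $L(x)=\langle\ell,x\rangle$ for some $\ell\in\mathbb{R}^d$ --- and $R(x)=f(x)-L(x)$, condition (\ref{pinelis}) gives $|R(x)|\le\frac{M_\varepsilon}{2}\|x\|^2$ for $\|x\|\le\varepsilon$, and linearity of $L$ yields
\[
\frac{\sqrt n}{\sigma}f(\overline{V}_n)=\Lambda_n+\Delta_n,\qquad \Lambda_n:=\frac{1}{\sigma\sqrt n}\sum_{i=1}^n L(V_i),\quad \Delta_n:=\frac{\sqrt n}{\sigma}R(\overline{V}_n),
\]
where $\Lambda_n$ is a normalized sum of i.i.d.\ centered real random variables of unit variance (here $\sigma>0$ is used) with $\mathbb{E}(|L(V)|^p)\le\|\ell\|^p\,\mathbb{E}(\|V\|^p)<\infty$, and on $\{\|\overline{V}_n\|\le\varepsilon\}$ one has $|\Delta_n|\le\frac{M_\varepsilon}{2\sigma\sqrt n}\|T_n\|^2$. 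For $\Lambda_n$ I would invoke the one-dimensional Berry--Esseen bound for moments of order $p$, namely $\sup_z|\mathbb{P}(\Lambda_n\le z)-\Phi(z)|\le\kappa_1 n^{-(p/2-1)}$, together with its non-uniform version $|\mathbb{P}(\Lambda_n\le z)-\Phi(z)|\le\kappa_2(1+|z|)^{-p}n^{-(p/2-1)}$.

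It then suffices to bound $|\mathbb{P}(\tfrac{\sqrt n}{\sigma}f(\overline{V}_n)\le z)-\mathbb{P}(\Lambda_n\le z)|$ uniformly in $z$ by a constant times $n^{-(p/2-1)}$; indeed, combined with the two estimates above and $\|\Phi'\|_\infty\le(2\pi)^{-1/2}$, this gives (\ref{pinelisbis}). That difference is at most $\mathbb{P}(\|\overline{V}_n\|>\varepsilon)+\mathbb{P}(|\Lambda_n-z|\le|\Delta_n|,\ \|\overline{V}_n\|\le\varepsilon)$. The first term is $o(n^{-(p/2-1)})$ since $\mathbb{E}(\|\overline{V}_n\|^p)\le Cn^{-p/2}$ by the Marcinkiewicz--Zygmund/Rosenthal inequality. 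For the second I would decompose $T_n$ along $\ell$ and its orthogonal complement, $\|T_n\|^2=\frac{\sigma^2}{\|\ell\|^2}\Lambda_n^2+\|T_n^\perp\|^2$, and discard three events: $\{|\Lambda_n|\ge c\sqrt n\}$ (probability $\le Cn^{-p/2}$, again by Rosenthal), $\{\|T_n^\perp\|>B_n\}$ (probability $\lesssim B_n^{-p}n^{-(p/2-1)}+\rme^{-cB_n^2}$ by a Fuk--Nagaev inequality, with $B_n\asymp\sqrt{\log n}$), and $\{|\Lambda_n|>2|z|\}$ --- on the last event, when also $|\Lambda_n|<c\sqrt n$, one has $|\Lambda_n-z|>\tfrac12|\Lambda_n|>|\Delta_n|$, so it contributes nothing. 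On the remaining event $|\Delta_n|\le\rho_n(z):=\frac{M_\varepsilon}{2\sigma\sqrt n}\big(\tfrac{4\sigma^2}{\|\ell\|^2}z^2+B_n^2\big)$ is deterministic given $z$, so the probability there is at most $\mathbb{P}(|\Lambda_n-z|\le\rho_n(z))$, which the non-uniform Berry--Esseen bound controls by $2\rho_n(z)\phi(c|z|)+\kappa_3 n^{-(p/2-1)}$; boundedness of $z^2\phi(z)$ keeps the $z^2$-contribution $\le Cn^{-1/2}\le Cn^{-(p/2-1)}$.

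The hard part is exactly this last step. The quantity $|\Delta_n|$ cannot be bounded uniformly --- it is of size $n^{-1/2}$ only typically, with a heavy tail carried by a single large $\|V_i\|$ --- so a crude union bound with a deterministic threshold only reaches $p<(1+\sqrt{17})/2$; one genuinely needs both the non-uniform inequality and the structural fact that, off an event of probability $O(n^{-(p/2-1)})$, $|\Delta_n|$ is dominated by a quantity measurable with respect to $\|T_n^\perp\|$ and $z$ alone. With $B_n\asymp\sqrt{\log n}$ this yields the bound up to a logarithmic factor at the endpoint $p=3$; removing that factor (to match the clean classical exponent) requires, as in the proof of the classical Berry--Esseen theorem itself, truncating the $V_i$ at scale $\sqrt n$ and applying Esseen's smoothing inequality directly to the characteristic function of $\tfrac{\sqrt n}{\sigma}f(\overline{V}_n)$. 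Collecting the finitely many constants --- all depending only on $p$, $\varepsilon$, $M_\varepsilon$, $\sigma$ and $\mathbb{E}\|V\|^p$ --- into a single $\kappa$ then finishes the argument.
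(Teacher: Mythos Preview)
The paper does not prove this statement at all: it is quoted as Corollary~3.7 of Pinelis~\cite{pinelis2009berry} and used as a black box in the next subsection to obtain a Berry--Esseen bound for the Sobol estimator. There is therefore no proof in the paper to compare your attempt against.

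That said, your sketch follows the right architecture --- a quantitative delta method splitting $f(\overline V_n)$ into a linear part handled by a one-dimensional Berry--Esseen bound of order $p$ and a quadratic remainder controlled via moment/Fuk--Nagaev inequalities and anti-concentration --- and this is indeed the strategy Pinelis uses. Two remarks are worth making. First, your own final sentence is more honest than the paper's formulation of the theorem: the constant $\kappa$ cannot depend only on $p$; it necessarily involves $\varepsilon$, $M_\varepsilon$, $\sigma$, and $\mathbb{E}\|V\|^p$ (Pinelis' actual statement makes this explicit). Second, you correctly flag that the naive decomposition leaves a logarithmic loss at $p=3$; Pinelis removes it not by Esseen smoothing on the full statistic, as you suggest, but by a more careful conditional treatment of the remainder combined with the non-uniform Berry--Esseen bound --- the characteristic-function route you propose would work in principle but is not what is done there.
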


\subsection{Theoretical result for the general case}

For any random variable $Z$, denote by $Z^c$ its centered version $Z-\E(Z)$. 
\begin{theorem}
Assume that the random variable $Y$ has finite moments up to order $6$. Then,
for all $z \in\mathbb{R}$,
\begin{equation}
\left|\P\left(\frac{\sqrt{N}}{\sigma}\left[S^{\bf u}_{N,\mathrm{Cl}}-\myS^{{\bf u}}\right]\leq z\right)-\Phi(z) \right|\leq \frac{\kappa}{\sqrt{N}} \, .
\label{berichon1}
\end{equation}
Here
\begin{equation*}
\sigma^2:=\Var \left(\frac{1}{V} \left(Y^c(Y^{\bf u})^c-S^{\bf u}_{\mathrm{Cl}}(Y^c)^2\right)\right)
\label{berichon2}
\end{equation*}
is the asymptotic variance of $\sqrt{N} S^{\bf u}_{N,\mathrm{Cl}}$. %\label{berichon}
\end{theorem}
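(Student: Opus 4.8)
The plan is to apply Pinelis' Theorem (Theorem \ref{thpin}) to an appropriate i.i.d. sequence and a smooth function, exactly as the Delta-method proof of Theorem \ref{CLTgen} did, but now tracking the quantitative Berry-Esseen rate. First I would exploit the translation invariance of $S^{\bf u}_{N,\mathrm{Cl}}$ and $\myS^{{\bf u}}$, as in the proof of Theorem \ref{CLTgen}, to assume that $Y$ is centered, i.e. $\E(Y)=0$. Then, in the one-dimensional case $k=1$, I would set
\[
V_i := \left(Y_iY_i^{\bf u}-\E(YY^{\bf u}),\ Y_i,\ Y_i^{\bf u},\ Y_i^2-\E(Y^2)\right)^t - \text{(centering)},
\]
so that $(V_i)$ is an i.i.d. centered sequence in $\R^4$, and define $f:\R^4\to\R$ by
\[
f(a,y,y_1,z) := \frac{a + \E(YY^{\bf u}) - yy_1}{z + \E(Y^2) - y^2} - \myS^{{\bf u}},
\]
chosen so that $f(\overline V_N) = S^{\bf u}_{N,\mathrm{Cl}} - \myS^{{\bf u}}$ and $f(0)=0$. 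The Fr\'echet derivative $L=Df(0)$ is read off from the Jacobian computed in the proof of Theorem \ref{CLTgen}: it has components $1/V$ on the $a$-coordinate, $0$ on the $y$- and $y_1$-coordinates, and $-\myS^{{\bf u}}/V$ on the $z$-coordinate, so $L(V) = \frac1V\big(Y^c(Y^{\bf u})^c - S^{\bf u}_{\mathrm{Cl}}(Y^c)^2\big)$ after undoing the centering convention, which gives exactly the claimed $\sigma^2 = \Var(L(V))$.

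Next I would verify the hypotheses of Theorem \ref{thpin} with $p=3$. The function $f$ is a ratio of polynomials whose denominator equals $\Var(Y)=V\neq 0$ at the origin, hence $f$ is $C^\infty$ in a neighborhood of $0$; by Remark \ref{rempinelis} this yields condition (\ref{pinelis}). The moment condition $\left(\E\|V\|^3\right)^{1/3}<\infty$ follows from the assumption that $Y$ has finite moments up to order $6$: the coordinate $Y_iY_i^{\bf u}$ has a third moment bounded (by Cauchy--Schwarz) in terms of $\E(Y^6)$, and the coordinate $Y_i^2$ likewise, while $Y_i$ and $Y_i^{\bf u}$ only need a sixth moment trivially. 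The nondegeneracy $\sigma>0$ is part of the standing hypothesis that $\Var Y\neq 0$ together with the (implicit, as in \cite{jaal}) assumption that the index estimator has a genuine asymptotic variance; I would note this. Then Theorem \ref{thpin} with $p=3$ gives
\[
\left|\P\left(\frac{f(\overline V_N)}{\sigma/\sqrt N}\leq z\right)-\Phi(z)\right|\leq \frac{\kappa}{N^{1/2}},
\]
which is precisely (\ref{berichon1}) since $f(\overline V_N)=S^{\bf u}_{N,\mathrm{Cl}}-\myS^{{\bf u}}$.

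The only genuinely delicate point is the identification of $L$ and the resulting formula for $\sigma^2$: one must be careful that the centering conventions used to build $V_i$ (so that $V_i$ is centered and $f(0)=0$) are consistent with the final formula expressed in terms of $Y^c$ and $(Y^{\bf u})^c$. Concretely, after the reduction $\E(Y)=0$ we have $Y^c=Y$ and $(Y^{\bf u})^c=Y^{\bf u}$, so $L(V)=\frac1V\big(YY^{\bf u}-\E(YY^{\bf u}) - S^{\bf u}_{\mathrm{Cl}}(Y^2-\E(Y^2))\big)$, and since $\Var$ is insensitive to the additive constants this equals $\Var\!\big(\frac1V(Y^c(Y^{\bf u})^c - S^{\bf u}_{\mathrm{Cl}}(Y^c)^2)\big)$ as stated. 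Everything else is the routine verification above; the heavy lifting is done by Pinelis' theorem, and the contribution here is simply to package $S^{\bf u}_{N,\mathrm{Cl}}$ into the form $f(\overline V_N)$ with $f$ smooth at $0$ and to check the sixth-moment bound suffices for the $p=3$ case.
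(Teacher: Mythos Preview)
Your proposal is correct and follows essentially the same route as the paper: define the centered i.i.d. vector $V_i=(Y_i^c(Y_i^{\bf u})^c-C_u,\,Y_i^c,\,(Y_i^{\bf u})^c,\,(Y_i^c)^2-V)^t$, set $f(x,y,z,t)=\frac{x-yz+C_u}{t-y^2+V}-\myS^{{\bf u}}$, invoke Remark~\ref{rempinelis} for condition~(\ref{pinelis}), compute $L=Df(0)=(1/V,0,0,-\myS^{{\bf u}}/V)$, and apply Theorem~\ref{thpin} with $p=3$. Your write-up is in fact slightly more explicit than the paper's in spelling out why the sixth-moment assumption gives $\E\|V\|^3<\infty$ and in flagging the implicit nondegeneracy $\sigma>0$.
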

\begin{proof}
We define 
$V_i=\left(Y_i^c(Y_i^{\bf u})^c-C_u,Y_i^c,(Y_i^{\bf u})^c,(Y_i^c)^2-V\right)^t$ and $f~: \mathbb{R}^4 \rightarrow \mathbb{R}$ as $f(x,y,z,t)=\frac{x-yz+C_u}{t-y^2+V}-\myS^{{\bf u}}$. Note that $f(0,0,0,0)=0$, $f(\overline{V}_N)=S^{\bf u}_{N,\mathrm{Cl}}-\myS^{{\bf u}}$ and by Remark \ref{rempinelis}, (\ref{pinelis}) holds. The result is then a direct application of Theorem \ref{thpin} once  
$\sigma^2=\mathbb{E}\left(L(V)^2\right)$ will be computed. We have 
$$\left(\frac{\partial f}{\partial x},\frac{\partial f}{\partial y},\frac{\partial f}{\partial z},\frac{\partial f}{\partial t}\right)(0,0,0,0)=\left(\frac1V,0,0,\frac{-C_u}{V^2}\right)\,.$$ 
Using notation in Section \ref{secpinelis} one gets
$$L(x,y,z,t)=\frac{1}{V}\left(x- \myS^{{\bf u}} t\right)\, \quad \trm{and} \quad L(V)=\frac{1}{V}\left(Y^c(Y^{\bf u})^c-\myS^{{\bf u}}(Y^c)^2\right)\, .$$
Straightforward computations lead to the required result.
\end{proof}

Then we have a Berry-Essen theorem for Sobol index estimator in a general case (whatever the first moment of $Y$). However, the constant of the bound is hard or even too complex to express explicitely. In the next section we present a Berry-Essen theorem with explicit bounds in the centered case but with an estimator of $S^{\bf u}_{\mathrm{Cl}}$ slightly different from $S^{\bf u}_{N,\mathrm{Cl}}$.

\subsection{Practical result in the centered case}
In this section we give a Berry-Esseen theorem for the estimator 
$$\widetilde{S}^{\bf u}_{N,\mathrm{Cl}}:=\frac{ \frac{1}{N} \sum  Y_i   Y_i^{\bf u}}{ \frac{1}{N}\sum  Y_i^2 }$$ 
in the centered case and $k=1$.  Further, let $\kappa\approx 0.42$ be the last best constant known in the classical Berry-Esseen theorem (\cite{popov}). We then have 

\begin{theorem}
Assume that the random variable $Y$ has finite moment up to order $6$. Then,
for all $t\in\mathbb{R}$,
\begin{equation}
\left|\P\left(\frac{\sqrt{N}}{\sigma}(\widetilde{S}^{\bf u}_{N,\mathrm{Cl}}-S^{\bf u}_{\mathrm{Cl}})\leq t\right)-\Phi(t) \right|\leq \frac{\kappa\mu_{3,N}}{\sqrt{N}}+\left|\Phi(t)-
\Phi\left(\frac{t}{\sqrt{1+\frac{t\nu_N}{\sigma\sqrt{N}V^2}}}\right)\right|.
\label{berichon1Centre}
\end{equation}
%\alert{Moi j'ai 
%\begin{equation}
%\left|\P\left(\frac{\sqrt{N}}{\sigma}(\widetilde{S}^{\bf u}_{N,\mathrm{Cl}}-S^{\bf u}_{\mathrm{Cl}})\leq t\right)-\Phi(t) \right|\leq \frac{\chi\mu_{3,N}}{\sqrt{N}}+\left|\Phi(t)-
%\Phi\left(\frac{2t}{\sqrt{1+ \frac{4}{V^2\sigma^2}\left(\frac{t\sigma}{\sqrt{N}}-S^{\bf u}_{\mathrm{Cl}}\right) \nu_N}}\right)\right|.
%\label{berichon1}
%\end{equation}}
Here
\begin{equation}
\sigma^2:=\Var \left(\frac{1}{V} \left(YY^{\bf u}-S^{\bf u}_{\mathrm{Cl}}Y^2\right)\right)
\label{berichon2bis}
\end{equation}
is the asymptotic variance of $\sqrt{N} \widetilde{S}^{\bf u}_{N,\mathrm{Cl}}$ and 
\begin{eqnarray*}
\mu_{3,N}&:=&\E\left[\left|\frac{\Delta_n-\E(\Delta_n)}{\sqrt{\Var \Delta_n}}\right|^3\right],\\
\Delta_N& :=& \sigma^{-1}V\left[YY^{\bf u}-\left(S^{\bf u}_{\mathrm{Cl}} +\frac{t\sigma}{\sqrt{N}}\right)Y^2\right],\\
\nu_N&:=&\left(\frac{t\sigma}{\sqrt{N}}+2S^{\bf u}_{\mathrm{Cl}}\right)\Var (Y^2)-2\Cov(YY^{\bf u},Y^2).
%\alert{\Delta_{N}}&:=&\alert{\sigma^{-1}\left[YY^{\bf u} V-\left(C_u +\frac{t\sigma}{\sqrt{N}}V\right)Y^2\right],}\\
%\alert{\nu_N} &:=&\alert{ \left(\frac{t\sigma}{\sqrt{N}}+ 3S^{\bf u}_{\mathrm{Cl}}\right) \Var (Y^2)-2\Cov(YY^{\bf u},Y^2).}
\end{eqnarray*}
%\label{berichonbis}
\end{theorem}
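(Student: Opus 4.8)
The plan is to reduce the statement to the classical Berry-Esseen theorem for i.i.d. sums of real random variables, exploiting the fact that the event whose probability we must estimate --- although it concerns a ratio --- can be rewritten \emph{exactly} (not merely asymptotically) as an event on a single i.i.d. sum. Since $Y$ is non-deterministic, $\frac1N\sum Y_i^2>0$ almost surely, and $\sigma^{-1}V>0$; hence for every $t\in\mathbb{R}$ and every $N$,
\begin{gather*}
\frac{\sqrt N}{\sigma}\bigl(\widetilde{S}^{\bf u}_{N,\mathrm{Cl}}-S^{\bf u}_{\mathrm{Cl}}\bigr)\le t
\iff \sum_{i=1}^N Y_iY_i^{\bf u}-\Bigl(S^{\bf u}_{\mathrm{Cl}}+\tfrac{t\sigma}{\sqrt N}\Bigr)\sum_{i=1}^N Y_i^2\le 0\\
\iff \sum_{i=1}^N \Delta_N^{(i)}\le 0,
\end{gather*}
where $\Delta_N^{(1)},\dots,\Delta_N^{(N)}$ are the i.i.d. copies of $\Delta_N$ built from the sample $(Y_i,Y_i^{\bf u})_{1\le i\le N}$. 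Thus the left-hand probability in \eqref{berichon1Centre} equals $\P\bigl(\sum_i\Delta_N^{(i)}\le 0\bigr)$.

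Next I would standardize this i.i.d. sum. Setting $m_N:=\E(\Delta_N)$ and $s_N^2:=\Var(\Delta_N)$, the hypothesis that $Y$ has finite moment up to order $6$, together with the fact that $Y^{\bf u}$ has the same law as $Y$, gives $\E|\Delta_N|^3<\infty$ (apply Cauchy-Schwarz to the cross term $Y^3(Y^{\bf u})^3$), so the classical Berry-Esseen theorem applied to $\sum_i\Delta_N^{(i)}$ yields, uniformly in the threshold,
\[
\left|\P\!\left(\sum_{i=1}^N\Delta_N^{(i)}\le 0\right)-\Phi\!\left(\frac{-\sqrt N\,m_N}{s_N}\right)\right|\le\frac{\kappa}{\sqrt N}\,\frac{\E\bigl|\Delta_N-m_N\bigr|^3}{s_N^{3}}=\frac{\kappa\,\mu_{3,N}}{\sqrt N}.
\]

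It then remains to compute $-\sqrt N\,m_N/s_N$ and to recognize it as the argument of $\Phi$ appearing in \eqref{berichon1Centre}. In the centered case $\E(YY^{\bf u})=C_u$ and $\E(Y^2)=V$, so $m_N=\sigma^{-1}V\bigl[C_u-(S^{\bf u}_{\mathrm{Cl}}+t\sigma/\sqrt N)V\bigr]=-tV^2/\sqrt N$. For the variance, writing $a:=S^{\bf u}_{\mathrm{Cl}}+t\sigma/\sqrt N$ and expanding $\Var[YY^{\bf u}-aY^2]=\Var(YY^{\bf u})-2a\Cov(YY^{\bf u},Y^2)+a^2\Var(Y^2)$ around $a=S^{\bf u}_{\mathrm{Cl}}$: the value at $a=S^{\bf u}_{\mathrm{Cl}}$ is $V^2\sigma^2$ by \eqref{berichon2bis}, and the remaining linear and quadratic terms in the shift $t\sigma/\sqrt N$ collapse, by the very definition of $\nu_N$, to $\tfrac{t\sigma}{\sqrt N}\nu_N$. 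Hence $s_N^2=\sigma^{-2}V^2\bigl(V^2\sigma^2+\tfrac{t\sigma}{\sqrt N}\nu_N\bigr)=V^4\bigl(1+\tfrac{t\nu_N}{\sigma\sqrt N\,V^2}\bigr)$ and therefore $-\sqrt N\,m_N/s_N=t\big/\sqrt{1+\tfrac{t\nu_N}{\sigma\sqrt N\,V^2}}$. Substituting this into the displayed bound and inserting $\pm\Phi(t)$ via the triangle inequality yields exactly \eqref{berichon1Centre}.

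The only step involving real computation is the variance expansion just described, which is routine bookkeeping once the quadratic in $a$ is written out; I do not expect it to be an obstacle. The one point deserving a word of care is the positivity of $s_N$ (equivalently $1+t\nu_N/(\sigma\sqrt N\,V^2)>0$), needed both for the normalization and for \eqref{berichon1Centre} to make sense: it holds for $N$ large enough as soon as $\sigma^2>0$, and $\Var\Delta_N>0$ is in any case implicit in the definition of $\mu_{3,N}$. Conceptually, the whole argument rests on the first display: because $\widetilde{S}^{\bf u}_{N,\mathrm{Cl}}$ carries $\frac1N\sum Y_i^2$ in its denominator, its distribution function is \emph{exactly} that of a shifted i.i.d. mean, which is precisely what permits the use of the universal constant $\kappa\approx 0.42$ rather than the implicit constant of Theorem \ref{thpin}; the price is the extra term $\bigl|\Phi(t)-\Phi(t/\sqrt{1+t\nu_N/(\sigma\sqrt N\,V^2)})\bigr|$, which is $O(N^{-1/2})$ uniformly on compact sets of $t$ but not globally.
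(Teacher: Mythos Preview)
Your proposal is correct and follows essentially the same route as the paper: rewrite the event $\{\sqrt N(\widetilde S^{\bf u}_{N,\mathrm{Cl}}-S^{\bf u}_{\mathrm{Cl}})/\sigma\le t\}$ exactly as $\{\sum_i\Delta_N^{(i)}\le 0\}$, compute $\E(\Delta_N)=-tV^2/\sqrt N$ and $\Var(\Delta_N)=V^4(1+t\nu_N/(\sigma\sqrt N\,V^2))$, apply the classical Berry--Esseen bound to the standardized sum, and finish with the triangle inequality. Your version is in fact slightly more explicit than the paper's (you spell out the variance expansion in $a$ and note why $\E|\Delta_N|^3<\infty$ and why $s_N>0$), but the argument is the same.
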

\begin{proof}

\iffalse %%%
To begin with let us compute the asymptotic variance $\sigma^2$ of $\sqrt{N}\widetilde{S}^{\bf u}_{N,\mathrm{Cl}}$ :
$$\sigma^2=\lim_{n\rightarrow\infty}N\Var\left(\frac{\sum_{i=1}^NY_iY_i^{\bf u}}{\sum_{i=1}^N Y_i^2}\right)=
\lim_{n\rightarrow\infty}N\Var \Psi\left(N^{-1}\sum_{i=1}^NY_iY_i^{\bf u},N^{-1}\sum_{i=1}^N Y_i^2\right),$$
with $\Psi(u,v):=uv^{-1},\;(u\in\mathbb{R},v>0)$. Writing an order 1 Taylor expansion of the function $\Psi$ in the neighborhood of $(C_u,V)$, we obtain 
\begin{eqnarray*}
\Psi\left(N^{-1}\sum_{i=1}^NY_iY_i^{\bf u},N^{-1}\sum_{i=1}^N Y_i^2\right)-S^{\bf u}_{\mathrm{Cl}}  &=& V^{-1}\left[N^{-1}\sum_{i=1}^NY_iY_i^{\bf u}-C_u\right]-V^{-2}C_u\left[N^{-1}\sum_{i=1}^NY_i^2-V\right]+o_{\P}(N^{-1}).
\end{eqnarray*}

The expression of $\sigma^2$ given in (\ref{berichon2bis}) follows obviously.
\fi %%

To begin with, we compute the asymptotic variance $\sigma^2$ of $\sqrt{N}\widetilde{S}^{\bf u}_{N,\mathrm{Cl}}$ :
we apply the so-called Delta method \cite{van2000asymptotic} to $W_i=(Y_iY_i^{\bf u},Y_i^2)$ and $\Psi(u,v):=uv^{-1},\;(u\in\mathbb{R},v>0)$. Then  $\sigma^2=J_{\Psi}(\E(W))\Sigma J_{\Psi}(\E(W))^t$ ($J_{\Psi}$ the Jacobian of $\Psi$) and the expression given in (\ref{berichon2bis}) follows obviously.

Now, for $t\in\mathbb{R}$, set 
$$A_t:=\left\{\frac{\sqrt{N}}{\sigma}\left(\widetilde{S}^{\bf u}_{N,\mathrm{Cl}}-S^{\bf u}_{\mathrm{Cl}}\right)\leq t \right\}.$$
Obvious algrebraic manipulations lead to
$$A_t=\left\{\sqrt{N^{-1}}\sum_{j=1}^{N}\Delta_{N,j}\leq 0\right\}$$
where, for $j=1,\ldots,N$,
$$\Delta_{N,j}:=\sigma^{-1}\left[Y_jY^{\bf u}_j V-\left(C_u +\frac{t\sigma}{\sqrt{N}}V\right)Y_j^2\right].$$

Now, we have
\begin{eqnarray*}
\E(\Delta_N)& =& \frac{-tV^2}{\sqrt{N}} \quad \trm{and} \quad \Var( \Delta_N) = V^4\left[1+ \frac{t\nu_N}{\sigma\sqrt{N}V^2}\right].
\end{eqnarray*}
%
%
%
%\alert{Moi j'ai
%$$\Var( \Delta_N) = \frac{V^4}{4}\left[1+ \frac{4}{V^2\sigma^2}\left(\frac{t\sigma}{\sqrt{N}}-S^{\bf u}_{\mathrm{Cl}}\right) \nu_N\right].$$
%en posant
%$$\nu_N:= \left(\frac{t\sigma}{\sqrt{N}}+ 3S^{\bf u}_{\mathrm{Cl}}\right) \Var (Y^2)-2\Cov(YY^{\bf u},Y^2).$$
%}

So that denoting by $\Delta_{N,\cdot}$ the empirical mean of $(\Delta_{N,j})_{j=1,\ldots,N}$, we obtain
$$A_t=\left\{\sqrt{N}\Delta_{N,\cdot}\leq 0\right\}=\left\{\sqrt{N}\left(\frac{\Delta_{N,\cdot}-\E(\Delta_N)}{\sqrt{\Var\Delta_N}}\right)\leq 
\frac{t}{\sqrt{1+\frac{t\nu_N}{\sigma\sqrt{N}V^2}}}\right\}.$$

%
%
%\alert{Moi j'ai
%$$A_t=\left\{\sqrt{N}\Delta_{N,\cdot}\leq 0\right\}=\left\{\sqrt{N}\left(\frac{\Delta_{N,\cdot}-\E(\Delta_N)}{\sqrt{\Var\Delta_N}}\right)\leq 
%\frac{2t}{\sqrt{1+ \frac{4}{V^2\sigma^2}\left(\frac{t\sigma}{\sqrt{N}}-S^{\bf u}_{\mathrm{Cl}}\right) \nu_N}}\right\}.$$
%}

Now, to conclude we apply Berry-Esseen theorem (see \cite{popov}) and the triangular inequality to obtain (\ref{berichon1}).\qedhere
\end{proof}

\subsection{Numerical applications for the centered case}

We denote by $B(t)$ the right hand side of the Berry-Esseen inequality \eqref{berichon1Centre}. It is clear that, for any $y>0$, we have:
\begin{equation}
\label{e:BECI}
\P( -y \leq \widetilde{S}^{\bf u}_{N,\mathrm{Cl}} - S^{\bf u}_{\mathrm{Cl}} \leq y) \geq
\left[ \Phi\left(\frac{\sqrt N}{\sigma}y\right)-\Phi\left(-\frac{\sqrt N}{\sigma} y\right) \right] - 
\left[ B\left(\frac{\sqrt N}{\sigma}y\right)+B\left(-\frac{\sqrt N}{\sigma}y\right) \right] 
\end{equation}
and:
\begin{equation}
\label{e:BECIUpper}
\P( -y \leq \widetilde{S}^{\bf u}_{N,\mathrm{Cl}} - S^{\bf u}_{\mathrm{Cl}} \leq y) \leq
\left[ \Phi\left(\frac{\sqrt N}{\sigma}y\right)-\Phi\left(-\frac{\sqrt N}{\sigma} y\right) \right] + 
\left[ B\left(\frac{\sqrt N}{\sigma}y\right)+B\left(-\frac{\sqrt N}{\sigma}y\right) \right] 
\end{equation}

Hence, the actual confidence level of the asymptotic confidence interval for $S^{\bf u}$ using $\widetilde S^{\bf u}_{N,\mathrm{Cl}}$ is greater than the theoretical level (first term of the sum above), minus a correction term given by the Berry-Esseen theorem (second term). The upper bound given by \eqref{e:BECIUpper} may also be of practical interest: an overly conservative (overconfident) interval is not always desirable, as a more precise interval with accurate level may exist. 

As in the previous applicational section \ref{ss:numconcentre}, the lower bound of the asymptotic confidence interval level involve unkown quantities (moments of $\Delta_N$, $Y$, $Y Y_u$) that have to be estimated. We designate by $L(y,N)$ (resp. $U(y,N)$) the estimator of the right hand side of \eqref{e:BECI} (resp. \eqref{e:BECIUpper}) when all unkown quantities are empirically estimated.

We take as output model the Ishigami function defined at Section \ref{ss:ishigamifun}, recentered by its true mean $7/2$ :
\[ Y=f(X_1,X_2,X_3)=\sin X_1+7 \sin^2 X_2+0.1 X_3^4 \sin X_1 - \frac{7}{2}. \]
Note that the true mean could also be replaced by an estimate of the mean. For $y$, we choose $y=1.96 \frac{\widehat{\sigma^2}}{\sqrt N}$, where $\widehat{\sigma^2}$ is an empirical estimate of $\sigma^2$, so as to compute  (estimators of ) upper and lower bounds of the actual level of the $95\%$-level confidence interval.

We present the numerical results, as functions of $N$, and for ${\bf u}=\{1\}$ in Figure \ref{f:berryEss}; for ${\bf u}=\{2\}$ or ${\bf u}=\{3\}$, the results were very similar.

\begin{figure}[h]
	\begin{center}
	%doplotsBE2
		\includegraphics[scale=.8]{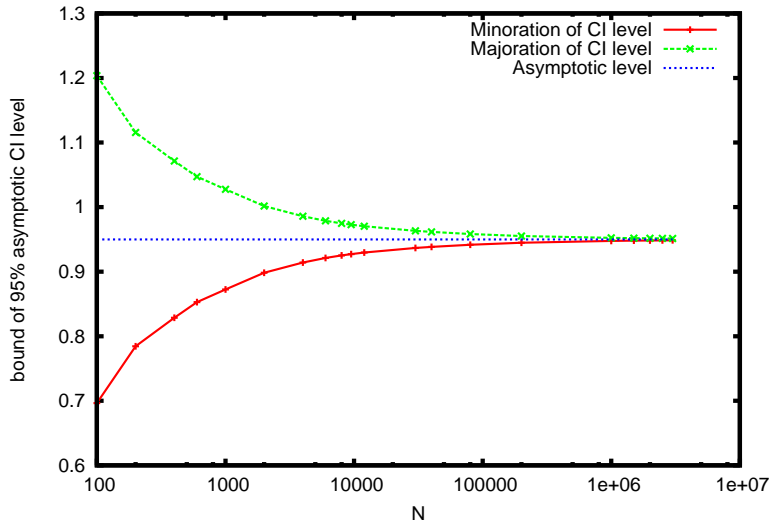}
		\caption{Plots of $L(N)$ (minoration of CI level)) and $U(N)$ (majoration of CI level) for ${\bf u}=\{1\}$ and different values of $N$. }
		\label{f:berryEss}
	\end{center}
\end{figure}

As expected, the actual confidence level is estimated under the ``target'' level of the confidence interval (0.95). As $N \rightarrow +\infty$, our bound converges (quite slowly) to 0.95. Nevertheless, the Berry-Esseen bound we have presented quickly attains confidence levels which are very close to the asymptotic level, and it can be used so as to provide a certification, at finite sample size, of the level of the asymptotic confidence interval.

\bigskip

\textbf{Acknowledgements. } This work has been partially supported by the French National
Research Agency (ANR) through COSINUS program (project COSTA-BRAVA
nr. ANR-09-COSI-015).

\bibliographystyle{plain}
\bibliography{biblio,bibi}

\def\cprime{$'$}
\begin{thebibliography}{10}

\bibitem{boucheron2013concentration}
St{\'e}phane Boucheron, G{\'a}bor Lugosi, and Pascal Massart.
\newblock {\em Concentration Inequalities: A Nonasymptotic Theory of
  Independence}.
\newblock OUP Oxford, 2013.

\bibitem{rocquigny2008uncertainty}
E.~De~Rocquigny, N.~Devictor, and S.~Tarantola.
\newblock {\em Uncertainty in industrial practice}.
\newblock Wiley Online Library, 2008.

\bibitem{helton2006survey}
J.C. Helton, J.D. Johnson, C.J. Sallaberry, and C.B. Storlie.
\newblock {Survey of sampling-based methods for uncertainty and sensitivity
  analysis}.
\newblock {\em Reliability Engineering \& System Safety}, 91(10-11):1175--1209,
  2006.

\bibitem{hickernell2012guaranteed}
Fred~J Hickernell, Lan Jiang, Yuewei Liu, and Art Owen.
\newblock Guaranteed conservative fixed width confidence intervals via monte
  carlo sampling.
\newblock {\em arXiv preprint arXiv:1208.4318}, 2012.

\bibitem{homma1996importance}
T.~Homma and A.~Saltelli.
\newblock {Importance measures in global sensitivity analysis of nonlinear
  models}.
\newblock {\em Reliability Engineering \& System Safety}, 52(1):1--17, 1996.

\bibitem{ishigami1990importance}
T.~Ishigami and T.~Homma.
\newblock An importance quantification technique in uncertainty analysis for
  computer models.
\newblock In {\em First International Symposium on Uncertainty Modeling and
  Analysis Proceedings, 1990.}, pages 398--403. IEEE, 1990.

\bibitem{jaal}
Alexandre Janon, Thierry Klein, Agn\`es Lagnoux, Ma{\"e}lle Nodet, and
  Cl{\'e}mentine Prieur.
\newblock {Asymptotic normality and efficiency of two Sobol index estimators}.

\bibitem{popov}
V.~Yu. Korolev and I.~G. Shevtsova.
\newblock An upper bound for the absolute constant in the {B}erry-{E}sseen
  inequality.
\newblock {\em Teor. Veroyatn. Primen.}, 54(4):671--695, 2009.

\bibitem{Monod2006}
H.~Monod, C.~Naud, and D.~Makowski.
\newblock Uncertainty and sensitivity analysis for crop models.
\newblock In D.~Wallach, D.~Makowski, and J.~W. Jones, editors, {\em Working
  with Dynamic Crop Models: Evaluation, Analysis, Parameterization, and
  Applications}, chapter~4, pages 55--99. Elsevier, 2006.

\bibitem{owen2}
Art~B Owen.
\newblock Better estimation of small sobol'sensitivity indices.
\newblock {\em arXiv preprint arXiv:1204.4763}, 2012.

\bibitem{owen1}
Art~B Owen.
\newblock Variance components and generalized sobol' indices.
\newblock Preprint available at \texttt{http://arxiv.org/abs/1205.1774}, 2012.

\bibitem{pinelis2009berry}
I.~Pinelis and R.~Molzon.
\newblock Berry-esseen bounds for general nonlinear statistics, with
  applications to pearson's and non-central student's and hotelling's.
\newblock {\em Arxiv preprint arXiv:0906.0177v3}, 2012.

\bibitem{rachdi2012stochastic}
Nabil Rachdi, Jean-Claude Fort, and Thierry Klein.
\newblock Stochastic inverse problem with noisy simulator-application to
  aeronautical model.
\newblock {\em Annales de la Facult\'e des Sciences de Toulouse}, 6,
  21:593--622, 2012.

\bibitem{saltelli-sensitivity}
A.~Saltelli, K.~Chan, and E.M. Scott.
\newblock {\em Sensitivity analysis}.
\newblock Wiley Series in Probability and Statistics. John Wiley \& Sons, Ltd.,
  Chichester, 2000.

\bibitem{sobol1993}
I.~M. Sobol.
\newblock Sensitivity estimates for nonlinear mathematical models.
\newblock {\em Math. Modeling Comput. Experiment}, 1(4):407--414 (1995), 1993.

\bibitem{sobol2001global}
I.M. Sobol.
\newblock {Global sensitivity indices for nonlinear mathematical models and
  their Monte Carlo estimates}.
\newblock {\em Mathematics and Computers in Simulation}, 55(1-3):271--280,
  2001.

\bibitem{van2000asymptotic}
A.~W. van~der Vaart.
\newblock {\em Asymptotic statistics}, volume~3 of {\em Cambridge Series in
  Statistical and Probabilistic Mathematics}.
\newblock Cambridge University Press, Cambridge, 1998.

\end{thebibliography}
\end{document}